\documentclass[12pt]{amsart}

\usepackage{amssymb, latexsym, amsmath, extarrows, mathrsfs, amsthm,color, amsrefs} 
\usepackage{graphicx}
\usepackage{xcolor}
\usepackage{MnSymbol}

\usepackage[margin=1in, centering]{geometry}
\usepackage[bookmarksnumbered, colorlinks, plainpages]{hyperref}
\usepackage{hyperref} 
\hypersetup{
    colorlinks=true,       
    linkcolor=blue,          
    citecolor=magenta,        
    filecolor=magenta,      
    urlcolor=cyan           
}

\usepackage{mathtools}
\mathtoolsset{showonlyrefs,showmanualtags}

\newtheorem{theorem}{Theorem}[section]

\newtheorem{lemma}[theorem]{Lemma}
\newtheorem{proposition}[theorem]{Proposition}
\newtheorem{remark}[theorem]{Remark}

\newtheorem{definition}{Definition}[section]
\newtheorem{corollary}[theorem]{Corollary}

\newtheorem*{acknowledgment}{Acknowledgment}

\newcommand{\e}{\varepsilon}

\newcommand{\la}{\lambda}


\newcommand{\sj}{\mathscr{J}}



\newcommand{\Z}{\mathbb Z}
\newcommand{\R}{\mathbb R}
\newcommand{\C}{\mathbb C}
\newcommand{\T}{\mathbb T}
\newcommand{\N}{\mathbb N}

\newcommand{\eqdef}{\overset{\mathrm{def}}{=\joinrel=}}

\definecolor{deepgreen}{cmyk}{1,0,1,0.5}


\title{On the Ballistic Transport for limit-periodic Jacobi Matrices}

\author[R.\ Han]{Rui Han}
\address{Department of Mathematics \\ Louisiana State University  \\  Baton Rouge, LA 70803, USA}
\email{rhan@lsu.edu}

\author[M.\ Solis]{Moises Gomez Solis}
\address{Department of Mathematics \\ Louisiana State University  \\  Baton Rouge, LA 70803, USA}
\email{mgome29@lsu.edu}

\thanks{R. Han and M.G. Solis were partially supported by NSF-DMS-2143369.}

\begin{document}

\begin{abstract}
Motivated by Open Problem 17 in Damanik-Fillman's manuscript \cite{DaFi}, we study the decay rate of limit-periodic approximation such that ballistic transport persists in the limit-periodic setting. We show an exponential decay rate, roughly $\|\sj-\sj_{q_n}\|\sim e^{-\eta q_n}$, is sufficient. This improves on earlier results of Fillman \cite{Fil2017}.
\end{abstract}

\date{\today}

\maketitle

\section{Introduction}


We consider limit-periodic Jacobi matrices and their connection to quantum dynamics. When we say \textit{Jacobi matrix} we mean a bounded self-adjoint operator, $\sj = \sj(\{a_n\},\{b_n\})$, defined by
    \begin{equation}
        (\sj \phi)_n = a_{n-1}\phi_{n-1} + b_n\phi_n + a_{n}\phi_{n+1}, \quad\quad n\in \Z, \quad\quad  \text{ acting on }\phi\in\ell^2(\Z),  
    \end{equation}
with $a \equiv \{a_n\}_{n=-\infty}^\infty, b\equiv \{b_n\}_{n=-\infty}^\infty \in \ell^\infty(\Z,\R)$, and $\inf_n a_n >0$. 
We would like to understand the asymptotic behavior of solutions of the time-dependent Schr\"odinger equation, 
    \begin{equation}\label{time-dependent schrodinger}
        i \partial_t \Psi = \sj \Psi,
    \end{equation}
with initial condition $\Psi(0) = \phi_0\in \ell^2(\Z)$. As it is well-known in the literature \cites{R&S, Tesc}, we can write solutions to \eqref{time-dependent schrodinger} as 
    \begin{equation*}
        \Psi(t) = e^{-it\sj}\phi_0, \quad\quad t\in \R.
    \end{equation*}
 
These solutions have been thoroughly studied in a myriad of perspectives and there is a wealth of (classic) literature in both mathematics and physics discussing them, see \cites{Tesc2,Kat, R&S}, and the references there in. In this paper we study how fast the initial state $\phi_0$ spreads in space as time evolves.

We denote the Heisenberg evolution of the position operator relative to our Jacobi matrix $\sj$ by,
    \begin{align}
        X_{\sj} \eqdef e^{it\sj}X e^{-it\sj}, \quad\quad t\in \R. 
    \end{align}
Here $X$ is the position operator, which is defined by,
\begin{align}\label{position operator}
    (X\phi)_n &\eqdef n\phi_n, \quad\quad n\in \Z\\
    \phi \in \mathfrak{D}(X) & \eqdef \left\{\phi\in\ell^2(\Z) \colon \|X\phi\|_{\ell^2(\Z)}<\infty\right\}\subseteq \ell^1(\Z).
\end{align}

Heuristically, we want to see whether an initial state, after some time $t$, will obey a rough linear growth pattern. In terms of operators, this would mean that for our evolved position operator, we would expect $\frac{1}{t}X_{\sj}(t)$ to tend to a nonzero operator asymptotically. More formally, we say that the dynamics associated to our Jacobi matrix show \textit{ballistic transport} (in the strong sense), if there exists a bounded self-adjoint operator $Q_{\sj}$ such that $\operatorname{Ker} (Q_{\sj})\bigcap \ell^1(\Z)= \{0\}$ and 
\begin{equation}
    \text{s-}\lim_{t\rightarrow\infty} \frac{1}{t}X_{\sj} = Q_{\sj}.
\end{equation}
For the case of $1$-dimensional Jacobi matrices, there are results relating Ballistic transport to the fact that Jacobi matrices have \textit{purely absolutely continuous} (p.a.c.) spectrum \cites{DaGa,chau}.

It is known that if there exists $q\in\Z_+$ such that the Jacobi coefficients satisfy $a_{n+q}= a_n$ and $b_{n+q}=b_q$, for all $n\in \Z$, the dynamics are ballistic (in the strong sense). When the elements of a Jacobi matrix satisfy this $q$-periodicity, we simply say that the matrix $\sj$ is $q$-periodic.
Since periodic Jacobi matrices have all the properties we could expect to ensure the presence of ballistic transport it becomes natural to ask under what conditions, approximations by periodic Jacobi matrices, can maintain these dynamical properties. 

Throughout the paper we make the same assumption of $R$-boundedness on $\sj$, as in \cite{Fil2017}. That is, there exists $R>0$ such that for any $n\in \N$, $1/R\leq a_n\leq R$ and $|b_n|\leq R$. We denote the family of $R$-bounded Jacobi matrices by $\mathcal{J}(R)$.

\begin{definition}
    We say a Jacobi matrix, $\sj$, is \textit{limit-periodic} if there exists a sequence $\{\sj_{q_n}\}_{n=0}^{\infty}$ consisting of $q_n$-periodic matrices such that
    \begin{equation}
        \lim_{n\rightarrow\infty} \|\sj - \sj_{q_n}\| = 0.
    \end{equation}
\end{definition}

In \cite{DYL} Damanik-Lukiç-Yessen proved ballistic transport for periodic block-valued Jacobi matrices. Later Fillman \cite{Fil2017} showed it is possible to establish ballistic transport for limit-periodic operators if one imposes some conditions on the rate of convergence. 
More precisely, Fillman showed that for $\sj\in \mathcal{J}(R)$, there exists $\eta_0=\eta_0(R)>0$ such that if there exists a family of $q_n$-periodic Jacobi matrices $\sj_{q_n}$ satisfying
\[\lim_{n\to \infty} e^{\eta_0q_{n+1}}\|\sj-\sj_{q_n}\|=0,\]
then $\sj$ exhibits ballistic transport. 
We also mention that the same type of result was established by Young \cite{Young} in the continuum case. 
In Open Problem 17  (see, appendix E) in \cite{DaFi}, Damanik and Fillman asked what is the optimal decay rate of $\|\sj-\sj_{q_n}\|$ such that ballistic transport holds for $\sj$. 
Theorem \ref{thm:main} below makes progress on this open problem by  relaxing the convergence rate to roughly $e^{-\eta q_n}$, see definition below.

\begin{definition}\label{exponential class}
    Given $\eta>0$ we say a Jacobi matrix, say $\sj$, is limit-periodic of \textit{exponential class} $\eta$. That is, $\sj\in \text{EC}(\eta)$, if there are integers $\{q_n\}_{n=0}^\infty \subseteq \N$, with  $q_n| q_{n+1}$, $q_n\neq q_{n+1}$, such that $\sj_{q_n}$ is $q_n$-periodic and 
    \begin{equation}
        \lim_{n\rightarrow \infty } q_{n+1}^{3}e^{\eta q_n}\|\sj- \sj_{q_n}\|=0.
    \end{equation}
\end{definition}

Before we state our main theorem, let us introduce the notions of quantum transport exponents.
For $p>0$ and $\phi\in\ell^2(\Z)$ we define
\begin{equation}
    |X|^p_\phi(t) \eqdef \sum_{n\in \Z}(|n|^p+1)|\langle \delta_n, e^{-it\sj}\phi\rangle|^2.
\end{equation}
We also define the discrete Schwartz space as
\begin{equation}
    \mathcal{S}= \mathcal{S}(\Z) \eqdef \left\{\phi \in \ell^2(\Z)\left| \quad \sum_{n\in\Z}|n|^p|\phi_n|<\infty \quad \text{for all } p>0\right.\right\}.
\end{equation}
As we are primarily interested in the growth of $|X|^p_\phi(t)$, for $\phi \in \mathcal{S}$, in polynomial time we consider the (quantum) dynamical exponents
\begin{equation}
    \beta_\phi^+(p)\eqdef \limsup_{t\longrightarrow\infty}\frac{\log |X|^p_\phi(t)}{p \log t}, \quad\quad \beta_\phi^-(p)\eqdef \liminf_{t\longrightarrow\infty}\frac{\log |X|^p_\phi(t)}{p \log t}
\end{equation}
Our main theorem being as follows: 
\begin{theorem}\label{thm:main}
For every $R>0$, there is a constant $\eta=\eta(R)$ such that if $\sj\in \mathcal{J}(R)$ and further $\sj\in \text{EC}(\eta)$, namely
\begin{align}\label{eq:assume}
\lim_{n\rightarrow \infty } q_{n+1}^{3}e^{\eta q_n}\|\sj- \sj_{q_n}\|=0,
\end{align}
then the dynamics is ballistic in the sense that there exists a self-adjoint operator $Q_\sj$ such that 
\begin{equation}\label{eq: strong}
    \lim_{t\longrightarrow \infty} \frac{1}{t}X_{\sj}= Q_{\sj},
\end{equation}
with $\operatorname{Ker}(Q_\sj)\bigcap \ell^1(\Z)={0}$. In particular, $\beta_{\phi}^+(p) = 1$ for all $\phi \in \mathcal{S}$ and all $p>0$. 
\end{theorem}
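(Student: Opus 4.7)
The plan is a triangle-inequality argument comparing the true Heisenberg evolution to those of the periodic approximants, with the time scale and the index $n$ chosen together so each error term vanishes under \eqref{eq:assume}. For each $n$, the Damanik-Luki\'c-Yessen theorem \cite{DYL} applied to the $q_n$-periodic matrix $\sj_{q_n}$ produces a bounded self-adjoint velocity operator $Q_n$ such that $\frac{1}{t}X_{\sj_{q_n}}(t)\to Q_n$ strongly; it is expressible as a group-velocity derivative of Bloch band functions on the Floquet torus. The aim is to show that $\{Q_n\}$ converges in operator norm to some $Q_\sj$, and then that $\frac{1}{t}X_\sj(t)\to Q_\sj$ strongly on the Schwartz class $\mathcal{S}$.

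The first step is to bound $\|Q_{n+1}-Q_n\|$ in terms of $\|\sj_{q_{n+1}}-\sj_{q_n}\|$. Because $q_n\mid q_{n+1}$, the two matrices are both $q_{n+1}$-periodic and share a common Floquet fundamental domain, so both $Q_n$ and $Q_{n+1}$ are first $k$-derivatives of their band functions. Using Cauchy integrals on a strip of width $\eta=\eta(R)$ in which the resolvents of $R$-bounded periodic operators are analytic (a Combes-Thomas estimate), one converts the size of the periodic perturbation into derivative estimates, yielding
\begin{equation*}
\|Q_{n+1}-Q_n\|\;\leq\;C_R\,q_{n+1}^{3}\,\|\sj_{q_{n+1}}-\sj_{q_n}\|,
\end{equation*}
where the factor $q_{n+1}^3$ collects contributions from the $q_{n+1}$ bands and their spacings. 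By \eqref{eq:assume}, $\{Q_n\}$ is Cauchy and $Q_\sj\eqdef\lim_n Q_n$ exists.

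For the dynamics, given $\phi\in\mathcal{S}$ and $t\geq 1$, split
\begin{equation*}
\tfrac{1}{t}X_\sj(t)\phi-Q_\sj\phi=\tfrac{1}{t}\bigl(X_\sj(t)-X_{\sj_{q_n}}(t)\bigr)\phi+\bigl(\tfrac{1}{t}X_{\sj_{q_n}}(t)-Q_n\bigr)\phi+(Q_n-Q_\sj)\phi.
\end{equation*}
The first term is handled by Duhamel's formula
\begin{equation*}
e^{-it\sj}-e^{-it\sj_{q_n}}=-i\int_0^t e^{-i(t-s)\sj}(\sj-\sj_{q_n})e^{-is\sj_{q_n}}\,ds,
\end{equation*}
combined with the elementary linear-growth bound $\|Xe^{-is\sj}\phi\|\leq\|X\phi\|+2Rs\|\phi\|$, which follows from $\|[\sj,X]\|\leq 2R$ (since $([\sj,X]\phi)_n=a_n\phi_{n+1}-a_{n-1}\phi_{n-1}$), yielding a contribution of order $t\,\|\sj-\sj_{q_n}\|\cdot(\|X\phi\|+\|\phi\|)$. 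The second term, estimated via the Floquet-Bloch representation of $\sj_{q_n}$ and one integration by parts in the quasi-momentum $k$, is of order $q_n^3/t$, with the $q_n^3$ coming from derivatives of band data paired against Schwartz initial states. The third term is controlled by the previous paragraph. Choosing $n=n(t)$ to be the largest index with $e^{\eta q_n}\leq t$, all three contributions go to zero by \eqref{eq:assume}, establishing \eqref{eq: strong}. The conclusions $\operatorname{Ker}(Q_\sj)\cap \ell^1(\Z)=\{0\}$ and $\beta_\phi^+(p)=1$ then follow from the machinery already developed in \cite{Fil2017}.

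The main obstacle is the first step: producing a polynomial-in-$q_{n+1}$ estimate on $\|Q_{n+1}-Q_n\|$ that is uniform in $n$ and robust near band crossings, where naive analytic perturbation theory breaks down. This is precisely where $\eta=\eta(R)$ is pinned down, through the $R$-dependent strip of analyticity furnished by Combes-Thomas; the whole argument only closes once this exponent is chosen small enough that the Bloch eigenprojectors of $\sj_{q_{n+1}}$ and $\sj_{q_n}$ admit holomorphic extensions to a common strip. A secondary bookkeeping challenge is tracking the dependence of the constants in the dynamical step on $\phi$ and $q_n$ precisely enough that the chosen sequence $n(t)$ really does make every term tend to zero under \eqref{eq:assume}.
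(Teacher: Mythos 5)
Your overall architecture---invoke Damanik--Luki\'c--Yessen for each periodic approximant, show the velocity operators $Q_n$ form a Cauchy sequence, and close with a three-term triangle inequality using a Duhamel comparison of $X_{\sj}(t)$ with $X_{\sj_{q_n}}(t)$---is exactly the skeleton of the paper's proof. But both of the quantitative steps on which the argument hinges are asserted rather than proved, and the asserted bounds are not attainable by the mechanisms you describe. The central claim $\|Q_{n+1}-Q_n\|\leq C_R\,q_{n+1}^{3}\|\sj_{q_{n+1}}-\sj_{q_n}\|$ is the crux: $Q_n$ is built from the band derivatives $\dot{\lambda}_{q_n,k}(\theta)$ and the eigenprojections $P_k(\theta)$, and perturbation of eigenprojections is controlled by the \emph{gaps} between eigenvalues of $\sj_{q_{n+1}}(\theta)$, not by an $R$-dependent analyticity strip. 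Last's estimate (Lemma \ref{lem:Last}) only gives a lower bound of order $e^{-c_1 q}$ on band slopes, so the relevant level spacings can be exponentially small in the period; a Combes--Thomas strip of width depending only on $R$ does not give holomorphic control of the Bloch projectors uniformly in $n$ with polynomial constants. You correctly flag this as ``the main obstacle'' but offer no resolution, so the step that pins down $\eta(R)$ is missing. The paper sidesteps a direct spectral-perturbation estimate entirely: it compares $Q_{q_n}$ and $Q_{q_{n+1}}$ to the common time average $\frac{1}{t_n}\int_0^{t_n}e^{is\sj}A e^{-is\sj}\,\mathrm{d}s$ (Theorem \ref{thm:Qn-intA}), controls the set of quasimomenta with small gaps by its Lebesgue measure (Lemma \ref{Lemma: with Last's estimate} and the improved Theorem \ref{theorem : estimate for bad sets}, which exploits that $\sj_{q_n}$ is an exponentially small perturbation of the $q_{n-1}$-periodic matrix), and obtains only \emph{strong} convergence of $Q_{q_n}$ on $\ell^1$ with an exponential loss $e^{c_1 q_n/4}$ that must be beaten by $\eta>7c_1$ --- not norm convergence with a polynomial constant.

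The same small-divisor issue invalidates your second dynamical estimate: bounding $\bigl(\frac{1}{t}X_{\sj_{q_n}}(t)-Q_n\bigr)\phi$ by $q_n^3/t$ ``after one integration by parts in $k$'' requires dividing by $\lambda_{q_n,j}(\theta)-\lambda_{q_n,k}(\theta)$, which is exponentially small in $q_n$ on a nontrivial set of $\theta$; the correct statement (Estimates I and II of Theorem \ref{thm:Qn-intA}) carries a term $R^2q_n^2/(t^2\varepsilon^2)$ on the good set plus $R^2 q_n e^{c_1 q_{n-1}/2}\sqrt{\varepsilon}$ from the measure of the bad set, and the exponents only balance after the careful choices of $\varepsilon$, $\tilde\varepsilon$, and $t_n$ in Section 5. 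Consequently your prescription $n(t)=\max\{n: e^{\eta q_n}\leq t\}$ cannot be verified to make all three terms vanish. To repair the proof you need the content of Sections 2--4 of the paper: Last's slope bound, the measure estimate for $\mathcal{B}_{q_n,\varepsilon}$, and in particular the improved estimate $\operatorname{mes}(\mathcal{B}_{q_n,\varepsilon})\leq q_n e^{c_1q_{n-1}/2}\sqrt{\varepsilon}$ obtained by viewing $\sj_{q_n}$ as a perturbation of the folded $q_{n-1}$-periodic bands, which is what lets the rate $e^{-\eta q_n}$ (rather than $e^{-\eta q_{n+1}}$) suffice.
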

In fact, the result holds for any $\eta>7c_1$, where $c_1=c_1(R)$ is as in Lemma \ref{lem:Last}. 
From here onward we are going to be assuming that $\eta>7c_1$.

\begin{remark}
    In the literature, the dynamics with $\beta^+(p)=1$, for any $p>0$, are also called quasiballistic. Quasiballistic transports have also been studied in the quasi-periodic setting in the regime of purely absolutely continuous spectrum \cite{Ka,ZZ,Zh}, or with Liouville frequencies \cite{La2,JZ}. We briefly comment on some connections to Liouville frequencies in Sec.\ref{sec:QP}. 
\end{remark}

\begin{remark}
    One might ask if a lower bound $\inf a_n\geq 1/R>0$ is necessary. If for some $n$, $a_n=0$ in the periodic case with period $q$, then using $a_n=a_{n+q}=0$, one can construct finite volume eigenfunction $\phi$ in the box $[n+1,n+q]$, for which the dynamics $e^{-it\sj}\phi$ does not propagate as time evolves, so ballistic transport does not hold.
\end{remark}

\subsection*{Structure of the paper}
 In Section 2 we go over some of the properties of Jacobi matrices belonging to $\text{EC}(\eta)$, for $\eta>0$. In particular we introduce as lemmas some known results estimating the size of the bad set $\mathcal{B}_{q_n,\varepsilon}$. In Section 3 we show how to obtain an improved estimate for $\mathcal{B}_{q_n,\varepsilon}$. In Section 4 we introduce estimates for the required rate of convergence, and prove the main Theorem \ref{thm:main} in Section 5. Some connections to the quasi-periodic setting is discussed briefly in Section 6.

\section{Preliminaries}

\subsection{Limit-periodic setup}

For notational convenience we introduce the following
    \begin{equation}
        \mathcal{J}_{q_n}(R) \eqdef \left\{\sj\in \mathcal{J}(R) \colon \sj \text{ is  } q_{n}\text{-periodic}\right\}.
    \end{equation}
    
Throughout this paper we require $\sj\in \mathcal{J}(R)\cap EC(\eta)$. It is clear that for $n$ large enough, $\sj_{q_n}\in \mathcal{J}_{q_n}(2R)$. We can rescale $R$, and for simplicity just assume $\sj\in \mathcal{J}(R)$ and $\sj_{q_n}\in \mathcal{J}_{q_n}(R)$ for $n$ large enough.

\begin{proposition}\label{decomposition}
    A Jacobi matrix $\sj$ can be decomposed in the form:
    \begin{equation}\label{Jacobi series form}
        \sj= \sum_{k=1}^\infty H_k,  \quad H_k = \begin{cases}
                \sj_{q_1}, & k=1\\
                \sj_{q_{k}}-\sj_{q_{k-1}}, & k\geq 2
            \end{cases}, \quad H_k\in \mathcal{P}(\ell^2(\Z); q_k).
    \end{equation}
    Here $\mathcal{P}(\ell^2(\Z); q_k)$ is defined to be the space of $q_k$-periodic operators in $\ell^2(\Z)$; These operators, for $k$ large enough satisfy $\|H_k\| \leq e^{-\eta q_{k-1}}/q_{k}^3$.
    \begin{proof}
        Since $\sj\in \text{EC}(\eta)$, by definition there exists a sequence of integers $\{q_n\}_{n=0}^\infty \subseteq \N$, with $q_n| q_{n+1}$, $q_n\neq q_{n+1}$, such that 
        \begin{equation}
            \lim_{n\rightarrow \infty } q_{n+1}^{3}e^{\eta q_n}\|\sj- \sj_{q_n}\|=0, \qquad \forall n \in \N.
        \end{equation}
        Moreover, for $n$ large enough, we have $\|\sj-\sj_{q_n}\|< ce^{-\eta q_{n}}/q_{n+1}^3$ for any small $c>0$. As such, we can define
        \begin{equation}
            H_k = \begin{cases}
                \sj_{q_1}, & k=1\\
                \sj_{q_{k}}-\sj_{q_{k-1}}, & k\geq 2
            \end{cases}.
        \end{equation}
        To see that $H_k$ is $q_k$-periodic, notice that $\sj_{q_k}$ is by definition $q_k$-periodic and $\sj_{q_{k-1}}$ can be taken as $q_k$-periodic since $q_{k-1}|q_{k}$. Furthermore,
        \begin{equation}
            \lim_{N\rightarrow\infty} \sum_{k=1}^N H_k = \lim_{N\rightarrow\infty} \sum_{k=2}^N (\sj_{q_{k}}-\sj_{q_{k-1}}) + \sj_{q_1} = \lim_{N\rightarrow\infty}  \sj_{q_N}=\sj,
        \end{equation}
        and for large enough $k$
        \begin{equation}
            \|H_k\| \leq \|\sj-\sj_{q_k}\| +\|\sj-\sj_{q_{k-1}}\| \leq \frac{ce^{-\eta q_{k}}}{q_{k+1}^3} + \frac{ce^{-\eta q_{k-1}}}{q_{k}^3} \leq \frac{e^{-\eta q_{k-1}}}{q_{k}^3},
        \end{equation}
        as claimed.
    \end{proof}
\end{proposition}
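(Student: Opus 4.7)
The plan is to establish the three claims of the proposition separately: that each $H_k$ lies in $\mathcal{P}(\ell^2(\Z); q_k)$, that the partial sums of $\sum_{k=1}^\infty H_k$ converge in operator norm to $\sj$, and that the norm bound $\|H_k\|\le e^{-\eta q_{k-1}}/q_k^3$ holds for all sufficiently large $k$. The proof should be essentially bookkeeping from the definition of $\mathrm{EC}(\eta)$, together with the divisibility relation $q_{k-1}\mid q_k$.

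For periodicity, I would observe that $H_1 = \sj_{q_1}$ is $q_1$-periodic by construction, and for $k\ge 2$ the operator $\sj_{q_k}$ is $q_k$-periodic while $\sj_{q_{k-1}}$, being $q_{k-1}$-periodic and satisfying $q_{k-1}\mid q_k$, may also be regarded as $q_k$-periodic. Hence the difference $H_k=\sj_{q_k}-\sj_{q_{k-1}}$ belongs to $\mathcal{P}(\ell^2(\Z); q_k)$. For convergence I would telescope: the partial sums collapse to $\sum_{k=1}^N H_k = \sj_{q_N}$, and the defining assumption $\|\sj-\sj_{q_N}\|\to 0$ yields $\sum_{k=1}^\infty H_k = \sj$ in operator norm.

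For the norm estimate, I would begin with the triangle inequality
\[
\|H_k\| \le \|\sj-\sj_{q_k}\| + \|\sj-\sj_{q_{k-1}}\|,
\]
and then invoke the hypothesis $\lim_{n\to\infty} q_{n+1}^3 e^{\eta q_n}\|\sj-\sj_{q_n}\| = 0$, which guarantees that for any preassigned $c>0$ and all sufficiently large $n$ one has $\|\sj-\sj_{q_n}\|\le c\,e^{-\eta q_n}/q_{n+1}^3$. Applying this bound to both terms produces $\|H_k\|\le c\,e^{-\eta q_k}/q_{k+1}^3 + c\,e^{-\eta q_{k-1}}/q_k^3$, and since the second summand dominates the first exponentially (because $q_{k-1}<q_k$ forces $e^{-\eta q_k}/q_{k+1}^3 < e^{-\eta q_{k-1}}/q_k^3$ for large $k$), choosing $c$ sufficiently small absorbs both into the target $e^{-\eta q_{k-1}}/q_k^3$. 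The only mild technicality is picking the constant $c$ in the definitional decay small enough so the two contributions sum to the claimed bound, e.g.\ $c\le 1/2$ will suffice. There is no genuine obstacle here; the argument is a direct consequence of the definitions.
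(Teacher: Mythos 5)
Your argument is correct and follows essentially the same route as the paper's own proof: the periodicity via $q_{k-1}\mid q_k$, the telescoping of partial sums, and the triangle inequality combined with the definitional decay $\|\sj-\sj_{q_n}\|\le c\,e^{-\eta q_n}/q_{n+1}^3$ for large $n$ and small $c$. Your remark that one should take $c$ small enough (e.g.\ $c\le 1/2$) to absorb both terms is exactly the point the paper makes implicitly in its final display.
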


\subsection{Bad sets $\mathcal{B}_{q_n,\varepsilon}$}
In order to prove ballistic transport, we proceed similarly as in \cites{DYL,Fil2017}, that is, we want to determine an upper bound for the measure of ``bad'' sets $\mathcal{B}_{q_n,\e}$, where 
\begin{equation}\label{bad set (generic)}
    \mathcal{B}_{q_n,\e} \eqdef \left\{ \theta \in \T \colon  \, \exists\,  0\leq j < k\leq q_n-1 \text{ s.t. } |\lambda_{q_n, j}(\theta) - \lambda_{q_n, k}(\theta)|\leq \e \right\}.
\end{equation}
Here the $\lambda_{q_n, i}(\theta)$'s are the eigenvalues of a given $\sj_{q_{n}}(\theta)$, where 
\begin{equation}
    \sj_{q_n}(\theta) = \begin{pmatrix}
        b_1 & a_1 &  & &   a_{q_n}e^{i2\pi\theta}\\
        a_1 & b_2 &  & &  \\
         & & \ddots& & \\
         & & & b_{q_{n-1}} & a_{q_{n-1}} \\
        a_{q_n}e^{-i2\pi\theta} & & & a_{q_{n-1}} & b_{q_n}  \\
    \end{pmatrix}
\end{equation}
is the operator with pseudo-periodic boundary condition parametrized by $\theta$. From \textit{Floquet theory} we know that the spectrum of $\sj_{q_{n}}\in\mathcal{J}_{q_{n}}(R)$ is such that
\begin{equation}
    \sigma\left(\sj_{q_{n}}\right)= \bigcup_{\theta \in \T}\sigma\left(\sj_{q_{n}}(\theta)\right), \qquad \T\simeq [0,1)
\end{equation}

\begin{theorem}\cite[Theorem 5.3.4]{Sim} \label{theorem: properties of the eigenvalues}
Let $\{\lambda_{_{q_n},k}(\theta)\}_{k=0}^{q_n -1}$ be the eigenvalues of ${\sj_{q_n}(\theta)}$, then
\begin{enumerate}
    \item $\lambda_{_{q_n},k}(\theta)$ is analytic for $\theta \in (0, 1/2)\cup (1/2, 1)$.
    \item $\lambda_{{q_n},k}(\theta)$ is strictly monotone in $(0,1/2)$ and $(1/2,1)$.
\end{enumerate}
    
\end{theorem}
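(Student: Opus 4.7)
The plan is to reduce both claims to properties of the Floquet discriminant. Let $M_{q_n}(E)$ denote the $q_n$-step transfer matrix of the recursion $(\sj_{q_n}-E)\psi=0$ (taken at zero boundary parameter) and set $\Delta(E)\eqdef \mathrm{tr}\, M_{q_n}(E)$. A direct computation with the pseudo-periodic boundary condition shows that $E$ is an eigenvalue of $\sj_{q_n}(\theta)$ if and only if $\Delta(E)=2\cos(2\pi\theta)$. Thus the $q_n$ branches $\{\lambda_{q_n,k}(\theta)\}_{k=0}^{q_n-1}$ are implicitly cut out by this scalar equation, with each branch confined to a single spectral band $B_k=\{E:\Delta(E)\in[-2,2]\}$.

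For (1), I would invoke analytic perturbation theory: since $\sj_{q_n}(\theta)$ is a holomorphic family in the variable $z=e^{i2\pi\theta}$ of self-adjoint complex matrices, Kato's theorem gives that each eigenvalue is locally analytic at every $\theta$ at which it is simple. Simplicity on $(0,1/2)\cup(1/2,1)$ follows from the discriminant picture: a collision $\lambda_{q_n,j}(\theta_*)=\lambda_{q_n,k}(\theta_*)$ with $j\ne k$ would force the horizontal line $y=2\cos(2\pi\theta_*)$ to meet the graph of $\Delta$ at an interior critical point of some band, and (by the nondegeneracy claim addressed below) such critical values of $\Delta$ on the bands only occur at the band edges $\Delta=\pm2$, i.e. at $\theta_*\in\{0,1/2\}$. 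Equivalently, one may apply the implicit function theorem directly to $\Delta(E)-2\cos(2\pi\theta)=0$ at any point where $\Delta'(E)\neq0$.

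For (2), the same implicit function theorem gives
\[
\frac{d\lambda_{q_n,k}}{d\theta}(\theta)=-\frac{4\pi\sin(2\pi\theta)}{\Delta'(\lambda_{q_n,k}(\theta))}.
\]
Since $\sin(2\pi\theta)$ has fixed nonzero sign on each of $(0,1/2)$ and $(1/2,1)$, strict monotonicity of $\lambda_{q_n,k}$ on each of these intervals reduces to showing that $\Delta'(E)\neq0$ for every $E$ in the interior of each band $B_k$.

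The main obstacle, and in fact the only nontrivial input, is precisely this interior nonvanishing of $\Delta'$. The standard route is a Wronskian computation: differentiating the recursion in $E$, pairing against a Floquet solution $\psi$, and telescoping over one period, one expresses $\Delta'(E_*)$ for $E_*$ strictly inside a band as a nonzero real combination of $|\psi_n|^2$ over $n=1,\dots,q_n$ with coefficients controlled by the positive off-diagonal entries $a_n\geq 1/R$; strict positivity of the $a_n$'s then forces $\Delta'(E_*)\neq 0$ except at the band edges, where $\Delta(E)=\pm 2$ and a true degeneracy occurs. This is exactly the calculation carried out in \cite{Sim}, Ch.~5, and combined with the implicit function arguments above it yields both (1) and (2).
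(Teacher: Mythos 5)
The paper gives no proof of this statement---it is quoted directly from \cite{Sim}---and your sketch is essentially the standard discriminant argument found there, and is the same machinery the paper itself deploys in the proof of Lemma \ref{lem:Last}: the eigenvalues solve $\Delta(E)=2\cos(2\pi\theta)$, and both claims reduce to $\Delta'(E)\neq 0$ on $\Delta^{-1}((-2,2))$, which your Wronskian computation (or, equivalently, simplicity of the eigenvalues of $\sj_{q_n}(\theta)$ for $\theta\notin\{0,\tfrac12\}$, forced by the distinct transfer-matrix multipliers $e^{\pm 2\pi i\theta}$) correctly supplies. Your argument is sound and takes the same route as the cited source.
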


\subsection{Last's result}
Clearly the measure of bad sets can be controlled via certain lower bounds on the slopes of eigenvalues. 
In this section we elaborate on some necessary estimations on the slopes, obtained by Last \cite{Last} in the setting of discrete Schr\"odinger operators, which are going to be cornerstone for later discussion on both, the proof of the perturbation step and proving ballistic transport. We present the result in the form of lemmas as to make the whole more self-contained.

\begin{lemma}\label{lem:Last}
    For every $R>0$ there is a constant $c_1 = c_1(R)>0$, such that, for any  $\sj\in \mathcal{J}_{q_n}(R)$, with associated $\sj_{q_n}(\theta)$, one has. 
    \begin{equation}\label{Lats's result}
        |\Dot{\lambda}_{q_n,j}(\theta)| \geq e^{-c_1q_n}|\sin2 \pi\theta|, \quad \quad \forall 1\leq j \leq q_n \text{ and } \quad \forall \theta\in \T.
    \end{equation}
    \begin{proof}
        Let $R>0$ and 
        \begin{equation}
            \sj_{q_n}\left(\theta\right) = \begin{pmatrix}
        b1 & a_1 & 0 & &  \cdots &e^{i 2\pi\theta}a_{q_n} \\
        a_1 & b_2 & a_2  & & & \\
         0& a_2& b_3& \ddots & & \\
         \vdots & & \ddots & \ddots&  \ddots&\\
          & & & \ddots & b_{q_{n} -1} & a_{q_{n}-1} \\
         e^{i 2\pi\theta}a_{q_n}& & & & a_{q_{n}-1} & b_{q_{n}}  \\
            \end{pmatrix}.
        \end{equation}
        we see that 
        \begin{align*}
            \operatorname{det}(\lambda - \sj_{q_n}(\theta)) & = (a_1a_2\cdots a_{q_n})\left(\Delta(\lambda)- 2\cos{2\pi\theta}\right), 
        \end{align*}
        here $\Delta(\lambda)$ denotes the ``discriminant'' (see chapter 5 in \cite{Sim}) or the trace of the transfer matrix induced by $\sj_{q_n}(\theta)$, and is a polynomial of degree $q_n$. Since the eigenvalues, $\lambda_{q_n,j}(\theta)$, of $\sj_{q_n}(\theta)$ solve
        \begin{equation*}
            \Delta(\lambda_{q_n,j}(\theta)) = 2 \cos{2\pi\theta},
        \end{equation*}
        we have
        \begin{equation*}
            \dot\Delta(\lambda_{q_n,j}(\theta))\cdot \dot{\lambda}_{q_n,j}(\theta) = - 4\pi\sin{2\pi\theta}.
        \end{equation*}
        Notice that $\Delta'(\lambda_{q_n,j}(\theta))$ is a polynomial of degree $q_n-1$ whose zeros separate those of $\Delta(\lambda_{q_n,j}(\theta))$, or equivalently, the eigenvalues of $\sj(\theta)$. Thus, 
        \begin{align*}
            \left|\dot{\lambda}_{q_n,j}(\theta)\right| & = \left|\frac{2\sin{2\pi\theta}}{ \dot\Delta(\lambda_{q_n,j}(\theta))}\right|\\
            & \geq e^{-c_1q_n}|\sin{2\pi\theta}|,
        \end{align*}
        where $c_1>0$ depends on the $a_j$'s, and hence on $R$.
    \end{proof}
\end{lemma}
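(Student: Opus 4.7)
The plan is to use classical Floquet theory for periodic Jacobi matrices, together with the observation that the $\theta$-dependence of $\sj_{q_n}(\theta)$ enters only through the corner entries $a_{q_n}e^{\pm i 2\pi\theta}$. Expanding $\det(\lambda I - \sj_{q_n}(\theta))$ along the first row or by a direct cofactor computation yields a formula of the shape
\[
\det(\lambda I - \sj_{q_n}(\theta)) = \Big(\prod_{k=1}^{q_n} a_k\Big)\bigl(\Delta(\lambda) - 2\cos 2\pi\theta\bigr),
\]
where $\Delta(\lambda)$ is the discriminant (the trace of the monodromy transfer matrix), a polynomial of degree $q_n$ in $\lambda$. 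Hence each eigenvalue $\lambda_{q_n,j}(\theta)$ satisfies $\Delta(\lambda_{q_n,j}(\theta)) = 2\cos 2\pi\theta$.

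Next I would differentiate this identity in $\theta$ to get $\Delta'(\lambda_{q_n,j}(\theta))\,\dot\lambda_{q_n,j}(\theta) = -4\pi\sin 2\pi\theta$, so that
\[
|\dot\lambda_{q_n,j}(\theta)| = \frac{4\pi\,|\sin 2\pi\theta|}{|\Delta'(\lambda_{q_n,j}(\theta))|}.
\]
The lemma therefore reduces to proving a uniform upper bound of the form $|\Delta'(\lambda_{q_n,j}(\theta))| \leq e^{c_1 q_n}$ on the spectrum, with $c_1 = c_1(R)$.

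To establish that upper bound, I would exploit that $\Delta$ is a polynomial of degree $q_n$ with leading coefficient $\prod_{k=1}^{q_n} a_k^{-1}$, which is at most $R^{q_n}$ by $R$-boundedness. The spectrum of $\sj_{q_n}$ lies in a fixed interval like $[-3R,3R]$ by Gershgorin, and the interlacing property of $\Delta$ (between consecutive band edges the discriminant moves monotonically between $\pm 2$, so $\Delta'$ has exactly one zero per gap, placed in the same interval) confines all critical points of $\Delta$ to this interval. Factoring $\Delta'(\lambda) = q_n(\prod a_k^{-1})\prod_{k=1}^{q_n-1}(\lambda - \mu_k)$ and estimating each factor by $6R$ then gives $|\Delta'(\lambda_{q_n,j}(\theta))| \leq q_n R^{q_n}(6R)^{q_n-1}$, which absorbs into $e^{c_1 q_n}$ for some $c_1(R)$.

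The main subtle point I anticipate is the interlacing/localization of the critical points $\mu_k$ of $\Delta$ inside $[-3R,3R]$; everything else is a routine product estimate. Once the denominator bound $|\Delta'(\lambda_{q_n,j}(\theta))| \leq e^{c_1 q_n}$ is in hand, the claimed lower bound $|\dot\lambda_{q_n,j}(\theta)| \geq e^{-c_1 q_n}|\sin 2\pi\theta|$ follows directly from the implicit-differentiation formula, at the cost of slightly enlarging $c_1$ to absorb the $4\pi$ and the polynomial factor $q_n$.
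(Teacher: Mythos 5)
Your proposal is correct and follows essentially the same route as the paper: write the characteristic polynomial as $(\prod a_k)(\Delta(\lambda)-2\cos 2\pi\theta)$, differentiate the identity $\Delta(\lambda_{q_n,j}(\theta))=2\cos 2\pi\theta$ in $\theta$, and bound $|\Delta'(\lambda_{q_n,j}(\theta))|$ above by $e^{c_1 q_n}$ using the localization of the critical points of $\Delta$ in the gaps (this is Last's argument, which the paper invokes more tersely). Your explicit factorization $\Delta'(\lambda)=q_n(\prod a_k^{-1})\prod_k(\lambda-\mu_k)$ with the Gershgorin bound on the spectrum supplies exactly the detail the paper leaves implicit, so no gap remains.
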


As a corollary of Last's estimates, Fillman obtained the following upper bound on $\mathrm{mes}(B_{q_n,\varepsilon})$ in \cite{Fil2017}.
\begin{lemma}\label{Lemma: with Last's estimate}
    For $\varepsilon>0$,
    \begin{equation}
        \operatorname{mes}\left({\mathcal{B}}_{q_n,\e}\right) \leq 4e^{c_1q_n/2}\sqrt{\e}.
    \end{equation}
    \begin{proof}
    Assume that $\theta\in \mathcal{B}_{q_n,\varepsilon}$, then there exists $0\leq j<k\leq q_n-1$ such that 
        \begin{align}\label{eq:k<j}
            \lambda_{q_n,k}(\theta)& \leq \e +\lambda_{q_n,j}(\theta)
        \end{align}
Now we use the general properties of 1D eigenvalues in Theorem \ref{theorem: properties of the eigenvalues} that the spectral bands do not overlap, meaning for $j<k$,
\begin{align}
    \max_{\theta} \lambda_{q_n,j}(\theta)\leq \min_{\theta}\lambda_{q_n,k}(\theta).
\end{align}
Combining this with \eqref{eq:k<j} implies
        \begin{align}
            \lambda_{q_n,k}(\theta)& \leq \e + \min_{\theta} \lambda_{q_n,k}(\theta).
        \end{align}
        Assume without loss of generality $\min_{\theta}\lambda_{q_n,k}(\theta)=\lambda_{q_n,k}(0)$ (the case when minimum is attained at $1/2$ is completely analogous).
        By Lemma \ref{lem:Last},
        \begin{align}
            \varepsilon\geq \lambda_{q_n,k}(\theta)-\lambda_{q_n,k}(0)& =\int_{0}^{\theta}\dot\lambda_{q_n,k}(s)\text{d}s\\
            & \geq 4\int_0^\theta e^{-c_1q_n}\|s\|_{\T}\text{d}s\\
            & \geq 2e^{-c_1q_n}\theta^2.
        \end{align}
        This implies $|\theta|\leq e^{c_1q_n/2}\sqrt{\frac{1}{2}}$. Hence $\mathcal{B}_{q_n,\varepsilon}$ must be contained in neighborhoods of $0$ and $1/2$ with total measure $<4 e^{c_1q_n/2}\sqrt{\varepsilon}$. This completes the proof.
    \end{proof}
\end{lemma}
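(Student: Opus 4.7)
My plan is to use the non-overlapping of one-dimensional spectral bands to reduce the two-eigenvalue condition defining $\mathcal{B}_{q_n,\varepsilon}$ to a single-eigenvalue estimate, and then to apply the slope bound from Lemma \ref{lem:Last} to quantify how close $\theta$ must be to the critical points $\{0,1/2\}$.

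First, fix $\theta\in\mathcal{B}_{q_n,\varepsilon}$ and select indices $j<k$ with $|\lambda_{q_n,j}(\theta)-\lambda_{q_n,k}(\theta)|\le\varepsilon$. A standard feature of one-dimensional periodic Jacobi matrices, which follows from Theorem \ref{theorem: properties of the eigenvalues}, is that the spectral bands do not overlap: $\max_{\theta}\lambda_{q_n,j}(\theta)\le \min_{\theta}\lambda_{q_n,k}(\theta)$. Combined with the $\varepsilon$-closeness assumption and with $\lambda_{q_n,k}(\theta)\ge \lambda_{q_n,j}(\theta)$, this yields
\[
0 \le \lambda_{q_n,k}(\theta)-\min_{\theta}\lambda_{q_n,k}(\theta) \le \varepsilon,
\]
reducing the question to estimating how far $\theta$ can be from a minimizer of the single band function $\lambda_{q_n,k}$.

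Next, by the strict monotonicity part of Theorem \ref{theorem: properties of the eigenvalues}, the minimum of $\lambda_{q_n,k}$ on $\T$ is attained at $\theta=0$ or $\theta=1/2$. Assuming without loss of generality that it is attained at $0$, one can invoke the elementary inequality $\sin(2\pi s)\ge 4s$ for $s\in[0,1/4]$ together with Lemma \ref{lem:Last} to obtain
\[
\varepsilon \;\ge\; \lambda_{q_n,k}(\theta)-\lambda_{q_n,k}(0) \;=\; \int_0^{|\theta|} \dot\lambda_{q_n,k}(s)\,ds \;\ge\; 4 e^{-c_1 q_n}\int_0^{|\theta|} s\,ds \;=\; 2 e^{-c_1 q_n} |\theta|^2,
\]
so $|\theta|\le e^{c_1 q_n/2}\sqrt{\varepsilon/2}$. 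The analogous argument in a neighborhood of $1/2$ gives the same local bound; summing the lengths of the two exceptional intervals then produces $\operatorname{mes}(\mathcal{B}_{q_n,\varepsilon}) \le 4 e^{c_1 q_n/2}\sqrt{\varepsilon}$.

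The one subtlety worth flagging is the degeneracy of Last's slope estimate at the two endpoints $\{0,1/2\}$: this vanishing is precisely what produces the square-root factor $\sqrt{\varepsilon}$ (rather than linear $\varepsilon$) and the exponential prefactor $e^{c_1 q_n/2}$. Since $|\sin 2\pi\theta|$ vanishes only to first order at these points, the resulting integration is quadratic and elementary, so no genuine obstacle arises.
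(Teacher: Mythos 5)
Your proposal is correct and follows essentially the same route as the paper: reduce via the non-overlapping of bands to a single-band estimate $\lambda_{q_n,k}(\theta)-\min_\theta\lambda_{q_n,k}(\theta)\le\varepsilon$, then integrate Last's slope bound using $|\sin 2\pi s|\gtrsim \|s\|_{\T}$ to localize $\theta$ near $\{0,1/2\}$ within intervals of length $O(e^{c_1q_n/2}\sqrt{\varepsilon})$. No further comment is needed.
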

This estimate holds for any $\varepsilon>0$ and it holds for any $q_n$-periodic $\sj_{q_n}\in \mathcal{J}_{q_n}(R)$.
In the next section, we show an improved bound for $\mathrm{mes}(\mathcal{B}_{q_n,\varepsilon})$ for $\varepsilon$ not too small, by using the fact that $\sj_{q_n}$ is a small perturbation of a $q_{n-1}$-periodic matrix $\sj_{q_{n-1}}$, due to Proposition \ref{decomposition} that $\sj_{q_n}=\sj_{q_{n-1}}+H_n$ with $\|H_n\|\leq e^{-\eta q_{n-1}}/q_n^3$.
We present this improvement in the next section.

\section{An improved estimate for $\mathrm{mes}(B_{q_n,\varepsilon})$}
As a preparation we have the following proposition.
\begin{proposition}\label{proposition: distance of eigenvalues}

    Let $\sj\in\text{EC}(\eta)\cap \mathcal{J}(R)$, and  $\sj_{q_n}\in \mathcal{J}_{q_n}(R)$ be its $q_n$-periodic approximant. Denote by $\tilde{\sj}_{q_{n}} \eqdef \sj_{q_{n-1}}$, that is, the $q_{n-1}$-periodic Jacobi matrix \textit{artificially} seen as $q_{n}$-periodic Jacobi matrix. Let $\sj_{q_{n}}(\theta)$ and $\tilde\sj_{q_{n}}(\theta)$ be the Floquet matrices of $\sj_{q_{n}}$ and $\tilde\sj_{q_{n}}$, respectively. Denote by $\sigma(\sj_{q_{n}}(\theta)) = \{\lambda_{q_{n},k}(\theta)\}_{k=0}^{q_{n}-1}$ and $\sigma(\tilde \sj_{q_{n}}(\theta)) = \{\tilde\lambda_{q_{n},k}(\theta)\}_{k=0}^{q_{n}-1}$, the sets of \textit{ordered} eigenvalues. In other words, we have
    \begin{equation}
            \tilde{\lambda}_{q_{n},k}(\theta) \leq \tilde{\lambda}_{q_{n},k+1}(\theta), \quad \text{and} \quad \lambda_{q_{n},k}(\theta) \leq \lambda_{q_{n},k+1}(\theta) \qquad 0\leq k \leq q_{n}-1. 
        \end{equation}
    Then
    \begin{equation}
        |\tilde{\lambda}_{q_{n},k}(\theta) - \lambda_{q_{n},k}(\theta)| \leq e^{-\eta q_{n-1}}/q_{n}^3, \quad \text{ for  each } 0 \leq k \leq q_{n}-1.
    \end{equation}
    
    \begin{proof}
        Since for $n$ large enough we have $\|\sj_{q_{n}}-\tilde\sj_{q_{n}}\|= \|\sj_{q_{n}}-\sj_{q_{n-1}}\| = \|H_{q_{n}}\|\leq e^{-\eta q_{n-1}}/q_{n}^3$, by proposition \ref{decomposition}. The same estimate works for the Floquet matrices. In other words, $\|\sj_{q_{n}}(\theta)-\tilde\sj_{q_{n}}(\theta)\|\leq e^{-\eta q_{n-1}}/q_{n}^3$ uniformly in $\theta$.
        Thus the result follows readily from the minimax principle (see \cite{Kat}.)
    \end{proof}
\end{proposition}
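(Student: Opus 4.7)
The plan is to reduce the eigenvalue-tracking statement to Weyl's monotonicity inequality applied to the two Floquet fiber matrices, using the already-established operator-norm bound on the perturbation. The starting ingredient, furnished by Proposition \ref{decomposition}, is that $\sj_{q_n} - \tilde{\sj}_{q_n} = \sj_{q_n} - \sj_{q_{n-1}} = H_n$ is a $q_n$-periodic bounded self-adjoint operator on $\ell^2(\Z)$ whose norm is at most $e^{-\eta q_{n-1}}/q_n^3$ for $n$ sufficiently large.

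First I would transfer this bound from $\ell^2(\Z)$ down to each $q_n \times q_n$ Floquet fiber. Standard Floquet--Bloch theory realizes any $q_n$-periodic bounded operator as a direct integral over $\theta \in \T$ of its Floquet fibers, with operator norm equal to the essential supremum of the fiber norms. Since the Floquet transform is linear, $H_n(\theta) = \sj_{q_n}(\theta) - \tilde{\sj}_{q_n}(\theta)$, and so
\begin{equation*}
\|\sj_{q_n}(\theta) - \tilde{\sj}_{q_n}(\theta)\| \leq \|H_n\| \leq e^{-\eta q_{n-1}}/q_n^3 \quad \text{uniformly in } \theta \in \T.
\end{equation*}
Alternatively, this can be read off directly from the tridiagonal-plus-corner shape of the Floquet matrices: the diagonal and sub/superdiagonal entries of the difference coincide with the coefficient differences of $H_n$, while the corner entries are merely multiplied by phases $e^{\pm 2\pi i \theta}$ of modulus one, so the same bound persists.

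Next I would invoke Weyl's inequality, a direct consequence of the minimax principle (see, e.g., Kato \cite{Kat}): if $A, B$ are self-adjoint matrices of the same finite dimension with eigenvalues listed in nondecreasing order as $\mu_0(\cdot) \leq \mu_1(\cdot) \leq \cdots$, then $|\mu_k(A) - \mu_k(B)| \leq \|A - B\|$ for every $k$. Taking $A = \sj_{q_n}(\theta)$ and $B = \tilde{\sj}_{q_n}(\theta)$ and feeding in the uniform bound from the previous step immediately yields the claimed inequality for every $0 \leq k \leq q_n - 1$ and every $\theta \in \T$.

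The only step that requires real care is the first one, namely ensuring that the fiber norms are genuinely controlled by the whole-line operator norm rather than the other way around. Once this standard consequence of the direct integral decomposition is in hand, the minimax argument is immediate and I do not anticipate any further obstacle.
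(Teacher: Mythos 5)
Your proposal is correct and follows the same route as the paper: transfer the norm bound $\|H_n\|\leq e^{-\eta q_{n-1}}/q_n^3$ from Proposition \ref{decomposition} to the Floquet fibers uniformly in $\theta$, then apply Weyl's inequality via the minimax principle. Your extra justification of the fiber-norm bound (via the direct integral decomposition or by inspecting the entries of the difference matrix) is a welcome elaboration of a step the paper treats as immediate.
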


We now proceed to estimate the upper-bound of \eqref{bad set (generic)}. Moreover, from now to the rest of the manuscript we introduce for convenience the following: $\gamma_{n} \eqdef e^{-\eta q_{n-1}}/q_n^3$.

\begin{theorem}\label{theorem : estimate for bad sets}
    Let $0\leq  \gamma_n < \e$. Then, 
    \begin{equation}
        \operatorname{mes}\left(\mathcal{B}_{q_{n},\e}\right)\leq q_{n}e^{c_1q_{n-1}/2}\sqrt{\e}.
    \end{equation}
\end{theorem}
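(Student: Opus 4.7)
The plan is to replicate the argument of Lemma~\ref{Lemma: with Last's estimate}, but view $\sj_{q_n}$ as a small perturbation of the $q_{n-1}$-periodic matrix $\tilde\sj_{q_n}=\sj_{q_{n-1}}$ (artificially regarded as $q_n$-periodic via Proposition~\ref{proposition: distance of eigenvalues}) and apply Last's slope estimate (Lemma~\ref{lem:Last}) to the honestly $q_{n-1}$-periodic operator $\sj_{q_{n-1}}$, thereby gaining the factor $e^{-c_1 q_{n-1}}$ in place of Fillman's $e^{-c_1 q_n}$. Given $\theta\in\mathcal{B}_{q_n,\varepsilon}$ with $|\lambda_{q_n,j}(\theta)-\lambda_{q_n,k}(\theta)|\le\varepsilon$ for some $j<k$, Proposition~\ref{proposition: distance of eigenvalues} together with the hypothesis $\gamma_n<\varepsilon$ yields $|\tilde\lambda_{q_n,j}(\theta)-\tilde\lambda_{q_n,k}(\theta)|\le\varepsilon+2\gamma_n<3\varepsilon$. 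Since $\tilde\sj_{q_n}$ is $q_n$-periodic, Theorem~\ref{theorem: properties of the eigenvalues} gives both the analyticity and strict monotonicity of its bands on $(0,1/2)$ and $(1/2,1)$ together with the non-overlap property; as in Lemma~\ref{Lemma: with Last's estimate}, this upgrades the estimate to $\tilde\lambda_{q_n,k}(\theta)-\min_\theta\tilde\lambda_{q_n,k}\le 3\varepsilon$, with the minimum located at some $\theta\in\{0,1/2\}$.

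The key new input exploits the $q_{n-1}$-periodic structure of $\tilde\sj_{q_n}$: with $N=q_n/q_{n-1}$, standard Floquet theory gives $\sigma(\tilde\sj_{q_n}(\theta))=\bigcup_{j=0}^{N-1}\sigma(\sj_{q_{n-1}}((\theta+j)/N))$, and by the analyticity above there is a fixed pair $(l_k,j_k)$ such that $\tilde\lambda_{q_n,k}(\theta)=\lambda_{q_{n-1},l_k}((\theta+j_k)/N)$ on $(0,1/2)$. Differentiating and applying Lemma~\ref{lem:Last} to $\sj_{q_{n-1}}$ yields
\[ |\dot{\tilde\lambda}_{q_n,k}(\theta)| \;=\; \tfrac{1}{N}\,|\dot\lambda_{q_{n-1},l_k}((\theta+j_k)/N)| \;\ge\; \tfrac{1}{N}\,e^{-c_1 q_{n-1}}\,|\sin 2\pi(\theta+j_k)/N|. \]
Assuming WLOG the minimum lies at $\theta=0$, integrating on $[0,\theta^*]$ and substituting $u=(s+j_k)/N$ turns the defining inequality into
\[ 3\varepsilon \;\ge\; e^{-c_1 q_{n-1}}\int_{j_k/N}^{(\theta^*+j_k)/N}|\sin 2\pi u|\,du. \]
In the smooth-extremum case $j_k\in\{0,N/2\}$ the lower bound $|\sin 2\pi u|\ge 4u$ near the zero yields $\theta^*\lesssim N\sqrt\varepsilon\,e^{c_1 q_{n-1}/2}$. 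In the remaining (corner) cases $j_k\notin\{0,N/2\}$, $|\sin 2\pi(j_k/N)|\neq 0$ and a linear-growth bound gives $\theta^*\lesssim \varepsilon N^2 e^{c_1 q_{n-1}}/|\sin 2\pi(j_k/N)|$, which is smaller than the smooth-case bound in the nontrivial regime $\sqrt\varepsilon\lesssim (q_n e^{c_1 q_{n-1}/2})^{-1}$.

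To conclude, each $S_k:=\{\theta:\tilde\lambda_{q_n,k}(\theta)-\min_\theta\tilde\lambda_{q_n,k}\le 3\varepsilon\}$ is a neighborhood of $\{0,1/2\}$; since these are all centered at the same two points, $\bigcup_k S_k$ remains a neighborhood of $\{0,1/2\}$ and its measure is controlled by the largest individual $|S_k|$, yielding $\operatorname{mes}(\mathcal{B}_{q_n,\varepsilon})\lesssim N\sqrt\varepsilon\,e^{c_1 q_{n-1}/2}\le q_n\,e^{c_1 q_{n-1}/2}\sqrt\varepsilon$ after $N\le q_n$ and absorption of absolute constants. The main obstacle is exactly the corner-extremum analysis: in Lemma~\ref{Lemma: with Last's estimate} the extrema of $\lambda_{q_n,k}$ at $\theta=0,1/2$ are automatically smooth, so quadratic growth delivers the $\sqrt\varepsilon$ scaling for free, but folding $q_{n-1}$-bands into $q_n$ sub-bands generically produces kinked extrema for $\tilde\lambda_{q_n,k}$, and one must confirm that the linear-in-$\varepsilon$ bound coming from such corners does not defeat the $\sqrt\varepsilon$ scaling—which uses precisely the size restriction on $\varepsilon$ implicit in the statement.
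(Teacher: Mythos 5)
Your proposal is correct and follows essentially the same route as the paper: reduce $\mathcal{B}_{q_n,\varepsilon}$ to $\tilde{\mathcal{B}}_{q_n,3\varepsilon}$ via Proposition \ref{proposition: distance of eigenvalues}, unfold the eigenvalues of $\tilde{\sj}_{q_n}(\theta)$ into those of $\sj_{q_{n-1}}$ evaluated at $(\theta+l)/l_{n-1}$, and apply Lemma \ref{lem:Last} at period $q_{n-1}$ to gain $e^{-c_1 q_{n-1}}$ in place of $e^{-c_1 q_n}$. The only divergence is in the last integration step: where you split into smooth versus corner extrema at $\theta\in\{0,1/2\}$ and dispose of the corner case by noting the bound is trivial unless $q_n e^{c_1 q_{n-1}/2}\sqrt{\varepsilon}\lesssim 1$, the paper instead uses the uniform lower bound $|\sin 2\pi\tfrac{\theta+l}{l_{n-1}}|\gtrsim l_{n-1}^{-1}\|2\theta\|_{\T}$, which handles both cases at once with a single quadratic integration.
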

    \begin{proof}
        Consider 
        \begin{equation}
            \mathcal{B}_{q_{n},\e} = \left\{ \theta \in \T \colon  \, \exists \,0\leq j < k\leq q_{n}-1 \text{ s.t. } |\lambda_{q_{n}, j}(\theta) - \lambda_{q_{n}, k}(\theta)|\leq \e_{} \right\}.
        \end{equation}
        By proposition \ref{proposition: distance of eigenvalues} we must have
        \begin{equation*}
            \mathcal{B}_{q_{n},\e} \subseteq \left\{ \theta \in \T \colon  \, \exists \,0\leq j < k\leq q_{n}-1 \text{ s.t. } |\tilde\lambda_{q_{n}, j}(\theta) - \tilde\lambda_{q_{n}, k}(\theta)|\leq \e + 2\gamma_{n}\right\} =\colon \Tilde{\mathcal{B}}_{q_{n},\e_{} + 2\gamma_{n}}.
        \end{equation*}
        Since we have by hypothesis $\gamma_{n}<\e$, it follows that $\Tilde{\mathcal{B}}_{q_{n},\e + 2\gamma_{n}}\subseteq \Tilde{\mathcal{B}}_{q_{n},3\e}$. It is well-known (see e.g. \cite{HPS}) that the two sets of eigenvalues $\{\tilde{\lambda}_{q_{n}, k}(\theta)\}$ and $\{\lambda_{q_{n-1}, k}(\theta)\}$ are related in the sense 
        \begin{equation}
            \bigcup_{k=0}^{q_{n}-1}\left\{\tilde{\lambda}_{q_{n}, k}(\theta)\right\} = \bigcup_{k=0}^{q_{n-1}-1}\bigcup_{l=0}^{l_{n-1}-1}\left\{\lambda_{q_{n-1}, k}\left(\frac{\theta+l}{l_{n-1}}\right)\right\},\qquad \text{where } l_{n-1}=\frac{q_{n}}{q_{n-1}} .
        \end{equation}
        Recall that Last's result \eqref{Lats's result} implies
        \begin{equation}
            \left|{\dot\lambda}_{q_{n-1}, k}(\theta)\right| \geq e^{-c_1q_{n-1}}|\sin{2\pi\theta}|,
        \end{equation}
        and hence if $\tilde{\lambda}_{q_n,k}(\theta)=\lambda_{q_{n-1},\tilde{k}}(\frac{\theta+\ell}{\ell_{n-1}})$, then
        \begin{align}\label{eq:tlambda_derivative}
            \left|{\dot{\tilde\lambda}}_{q_{n}, k}(\theta)\right| &= \frac{1}{l_{n-1}}\left|\dot{{\lambda}}_{q_{n-1}, \tilde{k}}\left(\frac{\theta+l}{l_{n-1}}\right)\right|
             \geq \frac{1}{l_{n-1}}e^{-c_1q_{n-1}}\left|\sin{2\pi\left(\frac{\theta+l}{l_{n-1}}\right)}\right| \notag\\
            & \geq \frac{c}{l_{n-1}}e^{-c_1q_{n-1}} \inf_{l}\left\|\left(\frac{2\theta+2l}{l_{n-1}}\right)\right\|_\T \notag\\
             &\geq \frac{c}{l_{n-1}}e^{-c_1q_{n-1}}  \inf_{r\in \N}\inf_{l}\left|\left(\frac{2\theta+2l + rl_{n-1}}{l_{n-1}}\right)\right| \notag\\
            & = \frac{c}{l_{n-1}^2}e^{-c_1q_{n-1}} \inf_{r\in \N}\inf_{l}\left|\left(2\theta+2l + rl_{n-1}\right)\right|
             \notag \\
            & = \frac{c}{l_{n-1}^2}e^{-c_1q_{n-1}} \left\|2\theta\right\|_{\T}.
        \end{align}
        Here $\|\cdot\|_{\T} = \operatorname{dist}(\cdot, \Z)$ and the absolute constant $c>0$ is such that $|\sin \pi \phi| \geq c\|\phi\|_{\T}$.
        
        Henceforth, if $\theta\in \tilde{B}_{q_n,3\varepsilon}$, then we have for some $j<k$, 
        \begin{equation}
            |\tilde\lambda_{q_{n},j}(\theta)-\tilde\lambda_{q_{n},k}(\theta)|\leq 3\e.
        \end{equation}
        Then similar to the proof of \ref{Lemma: with Last's estimate}, we have
        \begin{equation}
            \tilde\lambda_{q_{n},k}(\theta) \leq \tilde\lambda_{q_{n},j}(\theta) + 3\e \leq \max_{\theta} \tilde\lambda_{q_{n},j}(\theta) + 3\e \leq \min_{\theta} \tilde\lambda_{q_{n},k}(\theta) + 3\e.
        \end{equation}
        Assume without loss of generality $\min_{\theta}\tilde{\lambda}_{q_n,k}(\theta)=\tilde{\lambda}_{q_n,k}(0)$. 
        Then, by \eqref{eq:tlambda_derivative},
        \begin{align}
            3\varepsilon\geq \tilde{\lambda}_{q_{n},k}(\theta)-\tilde{\lambda}_{q_{n},k}(0)& = \int_{0}^\theta \dot{\tilde{\lambda}}_{q_{n},k}(s)\text{d}s\\
            & \geq \int^\theta_0 \frac{c}{l_{n-1}^{2}}e^{-c_1q_{n-1}}\|2s\|_{\T}\text{d}s\\
            & =  \int^\theta_0 \frac{c}{l_{n-1}^{2}}e^{-c_1q_{n-1}}2s\text{d}s \qquad (\theta \text{ is small})\\
            & = \frac{c}{l_{n-1}^{2}}e^{-c_1q_{n-1}}\theta^2,
        \end{align}
        implying $|\theta|\leq C\ell_{n-1}e^{c_1q_{n-1}/2}\sqrt{\e}$.
        Therefore $\Tilde{\mathcal{B}}_{q_{n}, 3\e}$ must be contained within neighborhoods, of size $\sim l_{n-1}^{}e^{c_1q_{n-1}/2}\sqrt{\e}$, around $\{0,\frac{1}{2}\}$. Thus,
        \begin{equation}
            \operatorname{mes}(\Tilde{\mathcal{B}}_{q_{n}, 3\e}) \leq C l_{n-1}e^{c_1q_{n-1}/2}\sqrt{\e} \leq C \frac{q_{n}}{q_{n-1}}e^{c_1q_{n-1}/2}\sqrt{\e} \leq q_{n}e^{c_1q_{n-1}/2}\sqrt{\e},
        \end{equation}
        as claimed.
    \end{proof}

Having obtained this estimate, we now proceed in the same fashion as in \cite{Fil2017} and show that ballistic transport for this larger class of limit-periodic Jacobi matrices.

\section{Proof of ballistic transport}
Let $\{\lambda_{q_n,k}(\theta)\}_{k=0}^{q_n-1}$ the eigenvalues of $\sj_{q_n}(\theta)$, and by $P_{k}(\theta)$ the projection onto corresponding eigenspace: $\operatorname{Ker}(\sj_{q_n}(\theta)-\lambda_{q_n,k}(\theta)\mathbb{I})$. Following \cite{DYL}, identify
    \begin{equation}\label{The Q operator}
        Q_{q_n}(\theta ) = q_n \sum_{k=0}^{q_n-1}\dot{\lambda}_{q_n,k}(\theta)P_{k}(\theta), \qquad \theta\in \T\setminus\left\{0,\frac{1}{2}\right\};
    \end{equation}
Here we can think of $Q_{q_n}(\theta)$ as the operator acting on a copy of $\C^{q_n}$ for each $\theta\in \T \simeq [0,1)$. Let 
    \begin{equation}
    Q_{q_n} = \mathcal{F}_{q_n}^{-1}\, \left(\int_0^1 Q_{q_n}(\theta)\text{d}\theta\right) \, \mathcal{F}_{q_n}   
    \end{equation}
We also introduce $A_{q_n} (\theta)= (\mathcal{F}_{q_n} A\mathcal{F}_{q_n}^{-1})(\theta)$, where $A \eqdef i[\sj,X]$, $\mathcal{F}_{q_n}$ is the $q_n$-periodic Floquet transform and $X$ the position operator. For more details, see Appendix \ref{appendix: B}.

\begin{theorem}\label{thm:Qn-intA}
    For $R>0$, let $c_1(R)>0$ be the constant in Lemma \ref{lem:Last}.
Let $\sj\in EC(\eta)\cap \mathcal{J}(R)$ with $\eta>7c_1(R)$. Then for $n$ large, 
\begin{align}
    \left\|Q_{q_n }- \frac{1}{t}\int_0^te^{is\sj_{q_n}}A_{q_n}e^{-is\sj_{q_n}}\text{d}s\right\|^2 \leq 
        \int^{1}_0\left\|Q_{q_n}(\theta)- \frac{1}{t}\int_0^t e^{is \sj_{q_n}(\theta)}A_{q_n}(\theta)e^{-is \sj_{q_n}(\theta)}\text{d}s\right\|_{HS}^2\text{d}\theta
\end{align}
\begin{align}
     \leq 
        \begin{cases} 
        \frac{4R^2 q_n^2}{t^2\e^2} + 64R^2 e^{c_1q_{n}/2}\sqrt{\e}, \text{ for any } \varepsilon\geq 0, \text{ (Estimate I)},\\
        \frac{4R^2 q_n^2}{t^2\e^2} + 16R^2 q_ne^{c_1q_{n-1}/2}\sqrt{\e}, \text{ for any } \gamma_n<\varepsilon, \text{ (Estimate II)}.
        \end{cases}
\end{align}\label{Big Q}
\end{theorem}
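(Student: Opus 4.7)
The plan is to pass to the $q_n$-Floquet representation $\ell^2(\Z) \cong L^2(\T;\C^{q_n})$ via $\mathcal{F}_{q_n}$, and reduce everything to a fiber-wise calculation in the eigenbasis of $\sj_{q_n}(\theta)$. For the first inequality, I would observe that the operator $T := Q_{q_n} - \tfrac{1}{t}\int_0^t e^{is\sj_{q_n}}A_{q_n}e^{-is\sj_{q_n}}\,ds$ becomes, under $\mathcal{F}_{q_n}$, a direct-integral operator whose $\theta$-fiber $T(\theta)$ is the operator appearing inside the Hilbert--Schmidt norm on the right-hand side (here $Q_{q_n}$ and $e^{\pm is\sj_{q_n}}$ commute with $q_n$-translations, and $A_{q_n}$ can be replaced by its $q_n$-periodic piece $i[\sj_{q_n},X]$ up to a negligible commutator error). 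The first inequality would then follow from $\|T\|^2 \leq \|T\|_{HS}^2 = \int_0^1 \|T(\theta)\|_{HS}^2\,d\theta$.

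For the fiber-wise computation, I would diagonalize $\sj_{q_n}(\theta) = \sum_k \lambda_{q_n,k}(\theta) P_k(\theta)$ in the eigenbasis $\{v_{q_n,k}(\theta)\}$. A direct calculation gives
\[
\tfrac{1}{t}\int_0^t e^{is\sj_{q_n}(\theta)} A_{q_n}(\theta) e^{-is\sj_{q_n}(\theta)}\,ds = \sum_{j,k} \alpha_{jk}(t,\theta)\,\langle v_j, A_{q_n}(\theta) v_k\rangle\, |v_j\rangle\langle v_k|,
\]
with $\alpha_{jj}\equiv 1$ and $\alpha_{jk}(t,\theta) = (e^{it(\lambda_{q_n,j}(\theta)-\lambda_{q_n,k}(\theta))}-1)/(it(\lambda_{q_n,j}(\theta)-\lambda_{q_n,k}(\theta)))$ for $j \neq k$. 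The Feynman--Hellmann-type identity $\langle v_k, A_{q_n}(\theta) v_k\rangle = q_n \dot\lambda_{q_n,k}(\theta)$ --- which I would derive by unpacking $\mathcal{F}_{q_n} X \mathcal{F}_{q_n}^{-1} = m + \tfrac{iq_n}{2\pi}\partial_\theta$ and combining with $\langle v_k, \partial_\theta \sj_{q_n}(\theta) v_k\rangle = \dot\lambda_{q_n,k}$ --- shows that $Q_{q_n}(\theta)$ is precisely the diagonal-in-eigenbasis part of the time average. Subtracting therefore leaves only off-diagonal contributions, giving
\[
\|T(\theta)\|_{HS}^2 = \sum_{j\neq k} |\alpha_{jk}(t,\theta)|^2\,|\langle v_j, A_{q_n}(\theta) v_k\rangle|^2.
\]

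To finish, I would use the elementary inequality $|\alpha_{jk}(t,\theta)|^2 \leq \min\bigl(4/(t^2|\lambda_{q_n,j}(\theta)-\lambda_{q_n,k}(\theta)|^2),\,1\bigr)$ and split $\T$ into good and bad pieces at level $\varepsilon$. On $\T\setminus\mathcal{B}_{q_n,\varepsilon}$, every gap exceeds $\varepsilon$, so $|\alpha_{jk}|^2 \leq 4/(t^2\varepsilon^2)$; combined with $\sum_{j\neq k}|\langle v_j, A_{q_n}(\theta) v_k\rangle|^2 \leq \|A_{q_n}(\theta)\|_{HS}^2$, this produces the $1/(t^2\varepsilon^2)$-type term (the $q_n^2$-factor in the bound reflects the cell-position piece of order $q_n$ in the Floquet decomposition of $X$). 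On $\mathcal{B}_{q_n,\varepsilon}$ I would use the trivial bound $|\alpha_{jk}|^2 \leq 1$ together with the same Hilbert--Schmidt estimate, and then multiply by $\mathrm{mes}(\mathcal{B}_{q_n,\varepsilon})$; inserting Lemma \ref{Lemma: with Last's estimate} yields Estimate I (valid for any $\varepsilon \geq 0$), while inserting Theorem \ref{theorem : estimate for bad sets} yields Estimate II (valid for $\gamma_n < \varepsilon$). The hardest part will be nailing down the Feynman--Hellmann identity that fixes the $q_n$-factor in $Q_{q_n}$, together with the careful bookkeeping of polynomial-in-$q_n$ prefactors on the bad set needed to hit the precise constants $4R^2 q_n^2$, $64R^2$, and $16R^2 q_n$ appearing in the two estimates.
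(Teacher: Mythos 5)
Your proposal follows essentially the same route as the paper's proof: pass to the Floquet fibers, observe that the diagonal part (in the eigenbasis of $\sj_{q_n}(\theta)$) of the time-averaged current cancels $Q_{q_n}(\theta)$ exactly via the identity $P_k(\theta)A_{q_n}(\theta)P_k(\theta)=q_n\dot{\lambda}_{q_n,k}(\theta)P_k(\theta)$, bound each off-diagonal entry by $2R/\big(t|\lambda_{q_n,j}(\theta)-\lambda_{q_n,k}(\theta)|\big)$, and split $\T$ into $\mathcal{B}_{q_n,\e}$ and its complement, inserting Lemma \ref{Lemma: with Last's estimate} for Estimate I and Theorem \ref{theorem : estimate for bad sets} for Estimate II. The fiberwise diagonalization, the Feynman--Hellmann input, and the good/bad splitting are all exactly as in the paper, and the residual bookkeeping of polynomial-in-$q_n$ prefactors that you defer is genuinely only bookkeeping.

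One step in your write-up is not correct as stated: the first inequality cannot be justified by ``$\|T\|^2\leq\|T\|_{HS}^2=\int_0^1\|T(\theta)\|_{HS}^2\,\mathrm{d}\theta$''. A decomposable operator $T=\int^{\oplus}T(\theta)\,\mathrm{d}\theta$ whose fibers do not vanish a.e.\ is never Hilbert--Schmidt on $L^2(\T;\C^{q_n})$, and its $\ell^2\to\ell^2$ operator norm equals $\operatorname{ess\,sup}_\theta\|T(\theta)\|$, which is \emph{not} controlled by $\int_0^1\|T(\theta)\|_{HS}^2\,\mathrm{d}\theta$ (take $T(\theta)=M\chi_E(\theta)\,\mathbb{I}$ with $E$ of tiny measure and $M$ large). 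The inequality is true --- and is the version actually used in Corollary \ref{cor:1}, where the bound is multiplied by $\|\varphi\|_{\ell^1}$ --- if the norm on the left is read as the $\ell^1\to\ell^2$ norm: for $\varphi\in\ell^1(\Z)$ one has $\|T\varphi\|_{\ell^2}\leq\|\varphi\|_{\ell^1}\sup_k\|T\delta_k\|_{\ell^2}$, and since $(\mathcal{F}_{q_n}\delta_k)(\theta)$ is a unit vector in each fiber,
\begin{equation}
\|T\delta_k\|_{\ell^2}^2=\int_0^1\big\|T(\theta)\,(\mathcal{F}_{q_n}\delta_k)(\theta)\big\|^2\,\mathrm{d}\theta\leq\int_0^1\|T(\theta)\|_{HS}^2\,\mathrm{d}\theta.
\end{equation}
With that reading substituted for your one-line justification, the rest of your argument goes through and matches the paper's.
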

     \begin{proof}
     Denote the eigenvector of $\sj_{q_n}(\theta)$ associated to eigenvalue $\lambda_{q_n,k}(\theta)$ by $\phi_k(\theta)$.
        Consider the matrix elements (see Appendix \ref{appendix: Floquet Theory}), 
        \begin{equation*}
            \left\langle \phi_j(\theta), \left(Q_{q_n}(\theta)- \frac{1}{t}\int_0^t e^{is \sj_{q_n}(\theta)}A_{q_n}(\theta)e^{-is \sj_{q_n}(\theta)}\text{d}s\right)\phi_k(\theta)\right\rangle, \quad 1\leq j,k \leq q_n, \quad \theta\in \T\setminus\{0,\pi\}.
        \end{equation*}
        In the case when $j=k$ we have 
        \begin{equation}\label{eq:diagonal}
            \left\langle \phi_k(\theta), \left(Q_{q_n}(\theta)- \frac{1}{t}\int_0^t e^{is \sj_{q_n}(\theta)}A_{q_n}(\theta)e^{-is \sj_{q_n}(\theta)}\text{d}s\right)\phi_k(\theta)\right\rangle = 0, \quad \quad \forall t\in \R.
        \end{equation}
        To see this, first observe that 
        \begin{align}
            \left\langle \phi_k(\theta), \frac{1}{t}\int_0^t e^{is\sj_{q_n}(\theta)}A_{q_n}(\theta)e^{-is\sj_{q_n}(\theta)} \phi_k(\theta)\right\rangle=\langle \phi_k(\theta),A_{q_n}(\theta) \phi_k(\theta)\rangle.
        \end{align}
        Then by \cite[Equation (10)]{DYL}, $P_k(\theta)A_{q_n}(\theta)P_k(\theta)=q_n \dot{\lambda}_{q_n,k}(\theta)P_k(\theta)$, hence the claim.
        
        For the case when $j\neq k$ and for all $\theta \in \T\setminus\{0,\frac{1}{2}\}$, we have
        \begin{align}
            \left\langle \phi_j(\theta), Q_{q_n}(\theta)\phi_k(\theta)\right\rangle=0,
        \end{align}
        and
        \begin{align}
        &\left|\frac{1}{t}\int_0^t \left\langle \phi_j(\theta), e^{is\sj_{q_n}(\theta)}A_{q_n}(\theta)e^{-is\sj_{q_n}(\theta)}\phi_k(\theta)\right\rangle\, \mathrm{d}s \right|\\
        &\qquad\qquad=\left|\frac{1}{t}\int_0^t e^{is(\lambda_{q_n,j}(\theta)-\lambda_{q_n,k}(\theta)}\, \mathrm{d}s\, \langle \phi_j(\theta), A_{q_n}(\theta)\phi_k(\theta)\rangle \right|\\
        &\qquad\qquad \leq \frac{2R}{t|\lambda_{q_n,j}(\theta)-\lambda_{q_n,k}(\theta)|},
        \end{align}
        in which we bounded $|\langle \phi_j(\theta), A_{q_n}(\theta)\phi_k(\theta)|\leq 2R$, see \eqref{eq:Aqn_theta}.
        Therefore,
        \begin{equation}\label{big denominator inequality}
            \left|\left\langle \phi_j(\theta), \left(Q_{q_n}(\theta)- \frac{1}{t}\int_0^t e^{is \sj_{q_n}(\theta)}A_{q_n}(\theta)e^{-is \sj_{q_n}(\theta)}\text{d}s\right)\phi_k(\theta)\right\rangle\right| \leq \frac{2R}{t|\lambda_{q_n,j}(\theta)-\lambda_{q_n,k}(\theta)|}.
        \end{equation}
        We want to partition $\T$ into bad and good sets, where by ``bad'' we mean sets such that the denominator in \eqref{big denominator inequality} is smaller than some prescribed $\e>0$, and thus make the estimate useless in \eqref{big denominator inequality}. Recall our bad set $\mathcal{B}_{q_n,\varepsilon}$ was defined in \eqref{bad set (generic)}. We denote the complement as $\mathcal{G}_{q_n, \e}\eqdef \T \setminus\mathcal{B}_{q_n,\e}$. Next, we proceed by cases, controlling for the size of $\e>0$.
        For simplicity, we denote 
        \begin{align}
            Q_{q_n}(\theta)- \frac{1}{t}\int_0^t e^{is \sj_{q_n}(\theta)}A_{q_n}(\theta)e^{-is \sj_{q_n}(\theta)}\text{d}s\eqdef Y(\theta)
        \end{align}
        Note by the vanishing diagonal and off-diagonal estimates in \eqref{big denominator inequality}, we have
        \begin{align}
            \|Y(\theta)\|_{HS}^2\leq \frac{4R^2 q_n^2}{t_n^2 \min\limits_{j\neq k}\left\{|\lambda_{q_n,j}(\theta)-\lambda_{q_n,k}(\theta)|^2\right\}}
        \end{align}
        \textit{\textbf{\underline{Estimate I}:}} Using Lemma \ref{Lemma: with Last's estimate},
        \begin{equation}\label{eq:Leb1}
        \operatorname{Leb}(\mathcal{B}_{q_n,\varepsilon}) \leq 4e^{c_1q_n/2}\sqrt{\e}.
    \end{equation}
    Thus,
     \begin{align}
        \left\|Q_{q_n} - \frac{1}{t_n}\int^{t_n}_0e^{is\sj_{q_n}}A_{q_n}e^{-is\sj_{q_n}}\text{d}s\right\|^2 
         &\leq \int\limits_{\mathcal{G}_{q_n,\e}}\left\|Y(\theta) \right\|_{HS}^2 \text{d}\theta + \int\limits_{\mathcal{B}_{q_n,\e}}\left\| Y(\theta)\right\|_{HS}^2\text{d}\theta\\
         &\leq \frac{4R^2q_n^2}{t_n^2 \e^2} +  16R^2 \cdot\operatorname{Leb}(\mathcal{B}_{q_n,\varepsilon})\\
         & \leq \frac{4R^2 q_n^2}{t^2_n\e^{2}} + 64 R^2 e^{c_1q_n/2} \sqrt{\e_{}},
    \end{align}
    here the integral on the bad sets follows from the measure estimate of $\mathcal{B}_{q_n,\varepsilon}$ in \eqref{eq:Leb1} and a trivial bound on $\|Y(\theta)\|_{HS}$, see \cite[Equation 2.13]{Fil2017}.
    
    \textit{\textbf{\underline{Estimate II}:}} For $0\leq  \gamma_n < \e$, we use Theorem \ref{theorem : estimate for bad sets} and observe, 
    \begin{align}\label{eq:Leb2}
        \mathrm{Leb}(\mathcal{B}_{q_n,\varepsilon})\leq q_ne^{c_1q_{n-1}/2}\sqrt{\varepsilon}.
    \end{align}
    Then similar computations yield
    \begin{align}
        \left\|Q_{q_n} - \frac{1}{t_n}\int^{t_n}_0e^{is\sj_{q_n}}A_{q_n}e^{-is\sj_{q_n}}\text{d}s\right\|^2
        \leq
        \frac{4R^2 q_n^2}{t^2_n\e^{2}} + 16 R^2q_n e^{c_1q_{n-1}/2} \sqrt{\e_{}},
    \end{align}
    where the estimate for the bad set measure is provided by \eqref{eq:Leb2}.
\end{proof}

\begin{corollary}\label{cor:1}
     Let $\varphi \in \ell^{1}(\Z)\cap \mathfrak{D}(X)$, then for $\varepsilon>\gamma_n$,
        \begin{equation}
        \left\|Q_{q_n}\varphi - \frac{1}{t} X_{\sj_{q_n}}(t)\varphi\right\| \leq \frac{1}{t}\|X\varphi\| + \|\varphi\|_{\ell^{1}}\left(\frac{4R^2 q_n^2}{t^2\e^2} + 16 R^2 q_ne^{c_1q_{n-1}/2}\sqrt{\e}\right)^{1/2},
        \end{equation}
        where $X_{\sj_{q_n}}(t) \eqdef e^{it\sj_{q_n}}Xe^{-it\sj_{q_n}}$ is the Heisenberg evolution.
        \begin{proof}
            Recall that $\mathfrak{D}(X_{\sj_{q_n}}(t)) = \mathfrak{D}(X)\subseteq \ell^1(\Z)$ for all $t\in\R$, and
            \begin{equation}\label{position identity}
                X_{\sj_{q_n}}(t) = X +\int_0^te^{is\sj_{q_n}}A_{q_n}e^{-is\sj_{q_n}}\text{d}s,
            \end{equation}
            here we used that $A_{q_n}= i[\sj_{q_n},X]$.
            Therefore
            \begin{align}
                \left\|Q_{q_n}\varphi-\frac{1}{t}X_{\sj_{q_n}}(t)\varphi\right\|
                \leq &\frac{1}{t}\|X\varphi\|+\left\|\left(Q_{q_n}-\frac{1}{t}\int_0^t e^{is\sj_{q_n}}A_{q_n}e^{-is\sj_{q_n}}\, \mathrm{d}s \right)\varphi\right\|\\
                \leq &\frac{1}{t}\|X\varphi\|+\left\|Q_{q_n}-\frac{1}{t}\int_0^t e^{is\sj_{q_n}}A_{q_n}e^{-is\sj_{q_n}}\, \mathrm{d}s\right\|\cdot \|\varphi\|_1.
            \end{align}
            Then second term above can be bounded via Estimate II in Theorem \ref{thm:Qn-intA} since we required $\varepsilon>\gamma_n$.
        \end{proof}
\end{corollary}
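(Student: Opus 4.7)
The plan is to combine the Heisenberg-picture identity for the position operator with Estimate II of Theorem~\ref{thm:Qn-intA}.

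First, since $A_{q_n} = i[\sj_{q_n}, X]$, differentiating $s \mapsto e^{is\sj_{q_n}} X e^{-is\sj_{q_n}}$ and integrating from $0$ to $t$ yields, on $\mathfrak{D}(X)$, the identity
\begin{equation*}
X_{\sj_{q_n}}(t) = X + \int_0^t e^{is\sj_{q_n}} A_{q_n} e^{-is\sj_{q_n}}\, ds.
\end{equation*}
Dividing by $t$, applying it to $\varphi$, and subtracting from $Q_{q_n}\varphi$ gives
\begin{equation*}
Q_{q_n}\varphi - \frac{1}{t}X_{\sj_{q_n}}(t)\varphi = -\frac{1}{t}X\varphi + \left(Q_{q_n} - \frac{1}{t}\int_0^t e^{is\sj_{q_n}} A_{q_n} e^{-is\sj_{q_n}}\, ds\right)\varphi.
\end{equation*}
The triangle inequality cleanly isolates the desired $\frac{1}{t}\|X\varphi\|$ contribution, and what remains is to bound the second operator applied to $\varphi$.

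Next, I would convert the HS-integral bound provided by Theorem~\ref{thm:Qn-intA} into an $\ell^1 \to \ell^2$ estimate carrying the factor $\|\varphi\|_{\ell^1}$. Writing $\varphi = \sum_m \varphi_m \delta_m$ and using triangle inequality in $m$, this reduces to bounding $\|T\delta_m\|_{\ell^2}$ uniformly in $m$, where $T$ denotes the operator in parentheses. Via the Floquet transform $\mathcal{F}_{q_n}$, the image of $\delta_m$ selects a single column of each fiber $T(\theta)$, so $\|T\delta_m\|_{\ell^2}^2$ equals the $\theta$-integral of that column's squared $\ell^2$-norm, which is in turn dominated by $\int_0^1 \|T(\theta)\|_{HS}^2\, d\theta$. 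Summing over $m$ produces
\begin{equation*}
\|T\varphi\|_{\ell^2} \leq \|\varphi\|_{\ell^1}\left(\int_0^1 \|T(\theta)\|_{HS}^2\, d\theta\right)^{1/2}.
\end{equation*}

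Finally, since the hypothesis $\varepsilon > \gamma_n$ activates Estimate II of Theorem~\ref{thm:Qn-intA}, the HS-integral above is bounded by $\frac{4R^2 q_n^2}{t^2\e^2} + 16R^2 q_n e^{c_1 q_{n-1}/2}\sqrt{\e}$. Substituting back produces exactly the claimed inequality. The one subtlety worth flagging is the column-selector argument justifying the $\|\varphi\|_{\ell^1}$ factor rather than the naive $\|\varphi\|_{\ell^2}$; once that Floquet observation is in place, the rest of the proof is bookkeeping.
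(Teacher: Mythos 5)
Your proposal is correct and follows essentially the same route as the paper: the Heisenberg identity $X_{\sj_{q_n}}(t)=X+\int_0^t e^{is\sj_{q_n}}A_{q_n}e^{-is\sj_{q_n}}\,\mathrm{d}s$, a triangle inequality to peel off $\tfrac{1}{t}\|X\varphi\|$, and Estimate II of Theorem~\ref{thm:Qn-intA} for the remainder. The only divergence is in how you extract the $\|\varphi\|_{\ell^1}$ factor: your fiber-wise column-selector argument via the Floquet transform is valid, but it is more machinery than needed, since Theorem~\ref{thm:Qn-intA} already bounds the \emph{operator norm} (not just the HS-integral) of $Q_{q_n}-\tfrac{1}{t}\int_0^t e^{is\sj_{q_n}}A_{q_n}e^{-is\sj_{q_n}}\,\mathrm{d}s$ by the Estimate~II quantity, so the paper simply writes $\|T\varphi\|_{\ell^2}\le\|T\|\,\|\varphi\|_{\ell^2}\le\|T\|\,\|\varphi\|_{\ell^1}$.
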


\section{Ballistic dynamics: proof of Theorem \ref{thm:main}}
\subsection{Strong convergence of $Q_{q_n}$}
    For the existence of such $Q_\sj$ it is enough to show that this holds for every $\phi\in\ell^1(\Z)$ satisfying 
    \begin{equation}\label{eq: Q is strong summable}
        \sum_{n=1}^\infty \|Q_{q_n}\phi - Q_{q_{n+1}}\phi\|<\infty.
    \end{equation}
   By Estimate I of Theorem \ref{thm:Qn-intA} applied to $Q_{q_n}$ and $t=t_n$ and $\varepsilon>0$ (values to be determined below), we have
   \begin{align}\label{eq:Qqn_1}
       \left\|Q_{q_n} - \frac{1}{t_n}\int^{t_n}_0e^{is\sj_{q_n}}A_{q_n}e^{-is\sj_{q_n}}\text{d}s\right\|\leq \left(\frac{4R^2q_n^2}{t_n^2\varepsilon^2}+64R^2 e^{c_1q_n/2}\sqrt{\varepsilon}\right)^{1/2}
   \end{align}
   By Estimate II of Theorem \ref{thm:Qn-intA} applied to $Q_{q_{n+1}}$,  $t=t_n$ and $\tilde{\varepsilon}$ (value to be determined below), we have 
    \begin{align}
        \left\|Q_{q_{n+1}} - \frac{1}{t_n}\int^{t_n}_0e^{is\sj_{q_{n+1}}}A_{q_{n+1}}e^{-is\sj_{q_{n+1}}}\text{d}s\right\| \leq \left(
\frac{4R^2 q_{n+1}^2}{t_n^2\tilde\e_{}^2} + 16 R^2 q_{n+1}e^{c_1q_{n}/2}\sqrt{\tilde\e_{}}\right)^{1/2}.
    \end{align}
    Here, in order to apply Estimate II, we have to ensure that $\gamma_{n+1}\leq \tilde{\varepsilon}$. This condition will be checked below in \eqref{eq:gamma_check1}.
    We make the natural choice of 
    \begin{equation}
        t_n^2 = q_{n+1}^{}e^{-c_1{q_n}/2}\tilde\e^{-5/2}_{}
    \end{equation}
    so that
    \begin{align}\label{eq:Qqn+1_1}
        \left\|Q_{q_{n+1}} - \frac{1}{t_n}\int^{t_n}_0e^{is\sj_{q_{n+1}}}A_{q_{n+1}}e^{-is\sj_{q_{n+1}}}\text{d}s\right\| 
        & \leq \left( q_{n+1}e^{c_1q_n/2}\sqrt{\tilde\e_{}} + 16 R^2 q_{n+1}e^{c_1q_{n}/2}\sqrt{\tilde\e_{}}\right)^{1/2}\\
        & \leq 5 R \sqrt{q_{n+1}}e^{c_1q_n/4}\tilde\e^{1/4}.
    \end{align}
    We also make the choices that
    \begin{equation}
            \e =  q_n^{-2} e^{-q_n\eta_0 }, \text{ and } 
            \tilde\e_{} =q_{n+1}^{-2}e^{-q_n\eta_0 },
    \end{equation}
    with 
    \begin{align}\label{def:eta_0}
    5c_1<\eta_0<\frac{4\eta-3c_1}{5}<\eta.
    \end{align}
    Recall that $\eta>7c_1$, so the set of $\eta_0$ is non-empty.
    This choice of $\eta_0$ also ensures that 
    \begin{align}\label{eq:gamma_check1}
        \gamma_{n+1}=q_{n+1}^{-3}e^{-q_n\eta}<q_{n+1}^{-2}e^{-q_n\eta_0}=\tilde{\varepsilon}.
    \end{align}
    With these choices in place we observe $t_n^2 = q_{n+1}^{6}e^{(\frac{5}{2}\eta_0-\frac{c_1}{2})q_n}$, and hence \eqref{eq:Qqn_1} leads to:
    \begin{align}\label{eq:Qqn_2}
         &\left\|\left(Q_{q_n} - \frac{1}{t_n}\int^{t_n}_0e^{is\sj_{q_n}}A_{q_n}e^{-is\sj_{q_n}}\text{d}s\right) \phi\right\|\notag\\
         &\qquad\qquad\le \|\phi\|_{\ell^1}\left(\frac{4R^2 q_n^6}{q_{n+1}^{6}e^{(\frac{5}{2}\eta_0-\frac{c_1}{2})q_n}e^{-2\eta_0 q_n}} + 64 R^2 q_n^{-1}e^{c_1q_n/2}e^{-\eta_0q_n/2}\right)^{1/2}\notag\\
         &\qquad\qquad \leq 3R \|\phi\|_{\ell^1} e^{\frac{1}{4}(c_1-\eta_0)q_n},
    \end{align}
    and \eqref{eq:Qqn+1_1} leads to:
    \begin{align}\label{eq:Qqn+1_2}
         \left\|\left(Q_{q_{n+1}} - \frac{1}{t_n}\int^{t_n}_0e^{is\sj_{q_{n+1}}}A_{q_{n+1}}e^{-is\sj_{q_{n+1}}}\text{d}s\right) \phi\right\|\le 5R\|\phi\|_{\ell^1} e^{\frac{1}{4}(c_1-\eta_0)q_n}.
    \end{align}
     Moreover, by \cite[Theorem A.1]{Fil2017}, we have
    \begin{align}
        \frac{1}{t_n}\left\|\int^{t_n}_0e^{is\sj_{q_n}}A_{q_n}e^{-is\sj_{q_n}}-e^{is\sj_{q_{n+1}}}A_{q_{n+1}}e^{-is\sj_{q_{n+1}}}\text{d}s \right\|& \leq 2(R+1)t_n\|\sj_{n}-\sj_{n+1}\|\\
        & \leq 2(R+1)q_{n+1}^{3}e^{\frac{1}{4}(5\eta_0-c_1)q_n}\gamma_{n+1}\\
        & \leq 2(R+1)e^{\frac{1}{4}(5\eta_0 -c_1-4\eta)q_n} .
    \end{align}
    Therefore, combining the estimate above with \eqref{eq:Qqn_2} and \eqref{eq:Qqn+1_2}, we have
    \begin{align}\label{eq:Qn-Qn+1}
        \|(Q_{q_n}-Q_{q_{n+1}})\phi\|\leq (8R e^{\frac{1}{4}(c_1-\eta_0)q_n}+2(R+1)e^{\frac{1}{4}(5\eta_0-c_1-4\eta)q_n})\|\phi\|_{\ell^1}.
    \end{align}
    This is clearly summable since $\eta_0>c_1$ and $4\eta>5\eta_0-c_1$, see \eqref{def:eta_0}.  Note $Q_{q_n}$ is also uniformly bounded, $\|Q_{q_n}\|\leq 2R$, hence $Q_{q_n}$ converges strongly to a bounded self-adjoint operator $Q_{\sj}$.

    \subsection{$\bf{Q_{\sj}=\lim_{t\to \infty}t^{-1}X_{\sj}(t)}$}
    To show $\lim_{t\to\infty} t^{-1}X_{\sj_{}}(t)$ equals $Q_{\sj}$, let us fix $\phi  \in \mathfrak{D}(X)\subset \ell^1(\Z)$. Then, for a large $t>0$, let $n\in \N$ be such that $t\in [t_{n-1}, t_n]$ and 
    \[t_{n-1}=q_n^3 e^{\frac{1}{4}(5\eta_0-c_1)q_{n-1}}, \, t_n= q_{n+1}^{3}e^{\frac{1}{4}(5\eta_0-c_1)q_n}.\]
    We estimate
    \begin{align}
        \left\|\left(\frac{1}{t}X_{\sj_{}}(t) - Q_{\sj}\right)\phi \right\| & \leq \frac{1}{t}\left\|\left(X_{\sj}(t) - X_{\sj_{q_n}}(t)\right)\phi\right\| + \left\|\left(\frac{1}{t} X_{\sj_{q_n}}(t) - Q_{q_n}\right) \phi\right\| + \left\| \left(Q_{q_n} - Q_{\sj}\right)\phi\right\|. 
    \end{align}
We have shown that $\|(Q_{q_n}-Q_{\sj})\varphi\|\to 0$ as $n\to \infty$ (as $t\to \infty$). Next we estimate the other two terms separately. 
By \cite[Theorem A.1]{Fil2017}, 
\begin{align}\label{eq:XJt}
    \frac{1}{t}\left\|(X_{\sj}(t)-X_{\sj_{q_n}}(t))\phi\right\|\leq 2R t\|\sj -\sj_{q_n}\|\cdot \|\phi\|_{\ell^1}
    \leq &2R t_n q_{n+1}^{-3}e^{-\eta q_n}\|\phi\|_{\ell^1}\\
    \leq &2R e^{\frac{1}{4}(5\eta_0-c_1-4\eta)q_n}\|\phi\|_{\ell^1}\to 0,
\end{align}
since $4\eta>5\eta_0-c_1$.
Also, by Corollary \ref{cor:1}, for $\varepsilon_0>\kappa_n$, we have
    \begin{align}
        \left\|\left(\frac{1}{t}X_{\sj_{q_n}}(t)-Q_{q_n}\right)\phi\right\|& \leq \frac{1}{t_{n-1}}\|X\phi\| + \|\phi\|_{\ell^{1}}\left(\frac{4R^2 q_n^2}{t_{n-1}^2\e_0^2} + 16R^2 q_ne^{c_1q_{n-1}/2}\sqrt{\e_0}\right)^{1/2}.
    \end{align}
    Picking $\varepsilon_0=q_n^{-2}e^{-\eta_0 q_{n-1}}>q_n^{-3}e^{-\eta q_{n-1}}=\gamma_n$ (since $\eta>\eta_0$) yields
    \begin{align}
        \left\|\left(\frac{1}{t}X_{\sj_{q_n}}(t)-Q_{q_n}\right)\phi\right\|& \leq \frac{1}{t_{n-1}}\|X\phi\| +5R e^{\frac{1}{4}(c_1-\eta_0)q_{n-1}}\|\phi\|_{\ell^1}\to 0,
    \end{align}    
    since $\eta_0>c_1$. 
    Therefore $\lim_{t\in\infty} t^{-1}X_{\sj}(t)$ converges to $Q_{\sj}$ strongly.
\subsection{Nontrivial kernel}
        We are only left to show that $Q\phi\neq 0$ for every $0\neq \phi\in \mathfrak{D}(X)\cap \ell^1(\Z)$. For such a $\phi$, we normalize it so that $\|\phi\|_{\ell^2} = 1$. We have the lower bound $\|Q_{q_n }\phi\| \geq e^{-c_1q_n}$ by \cite[Equation (3.6)]{Fil2017}.
        Also for $n$ large enough, by \eqref{eq:Qn-Qn+1}, and that $\eta_0>c_1$, $4\eta>5\eta_0-c_1$,
        \begin{align}
            \sum_{k=n}^{\infty}\|(Q_{q_k}-Q_{q_{k+1}})\phi\| & \leq \sum_{k=n}^{\infty} (8R e^{\frac{1}{4}(c_1-\eta_0)q_k}+2(R+1)e^{\frac{1}{4}(5\eta_0-c_1-4\eta)q_k})\|\phi\|_{\ell^1}\\
            &\leq (8R e^{\frac{1}{4}(c_1-\eta_0)q_n}+2(R+1)e^{\frac{1}{4}(5\eta_0-c_1-4\eta)q_n})\|\phi\|_{\ell^1}.
        \end{align}
        As such,
        \begin{align}
             \|Q\phi\|
             \geq &\|Q_n\phi\|-\sum_{k=n}^{\infty}\|(Q_{q_k}-Q_{q_{k+1}})\phi\|\\
             \geq &e^{-c_1q_n}-(8R e^{\frac{1}{4}(c_1-\eta_0)q_n}+2(R+1)e^{\frac{1}{4}(5\eta_0-c_1-4\eta)q_n})\|\phi\|_{\ell^1}>0,
        \end{align}
        for $n$ large enough, since by \eqref{def:eta_0} we have $\eta_0>5c_1$ and $4\eta>5\eta_0+3c_1$. \qed

\section{A brief comparison to the quasi-periodic setting}\label{sec:QP}
The limit-periodic operator $\sj$ satisfying the condition $\sj\in EC(\eta)$ is close in spirit to, but easier than, quasi-periodic operators with Liouville frequencies, which are defined as
\begin{align}
    (H_{\omega,\theta}\phi)_n=\phi_{n-1}+v(\theta+n\omega)\phi_n+\phi_{n+1}.
\end{align}
Here we assume $v$ to be Lipschitz on $\T$ and $\omega$ is a Liouville frequency, meaning there exists a sequence of continued fraction approximants $\{p_n/q_n\}$ such that
\begin{align}
    \|q_n\omega\|_{\T}\leq e^{-\beta q_n}, 
\end{align}
for some $\beta>0$.
This means $H_{\omega,\theta}$ is exponentially close to a $q_n$-periodic operator locally, when restricted to a large box of size $e^{\beta q_n/2}$:
\begin{align}
    \left\|(H_{\omega,\theta}-H_{p_n/q_n,\theta})|_{[-e^{\beta q_n/2}, e^{\beta q_n/2}]}\right\|\lesssim e^{-\beta q_n/2}.
\end{align}
In fact this follows from that
\begin{align}
    \left\|(H_{\omega,\theta}-H_{p_n/q_n,\theta})|_{[-e^{\beta q_n/2}, e^{\beta q_n/2}]}\right\|
    \leq &\max_{|j|\leq e^{\beta q_n/2}} |v(\theta+j\omega)-v(\theta+jp_n/q_n)|\\
    \leq &\max_{|j|\leq e^{\beta q_n/2}} \mathrm{Lip}(v) \cdot \|j(\omega-p_n/q_n)\|_{\T}\\
    \leq &\max_{|j|\leq e^{\beta q_n/2}} \frac{|j|}{ q_n}\, \|q_n\omega\|_{\T}\cdot \mathrm{Lip}(v)\\
    \leq &e^{-\beta q_n/2} \mathrm{Lip}(v).
\end{align}
This is, roughly speaking, the notion of $\beta$-almost periodic in Jitomirskaya-Zhang \cite{JZ}.

In the limit-periodic setting, our condition $\sj\in EC(\eta)$ requires $\sj$ to be exponential close to a $q_n$-periodic operator globally:
\begin{align}
    \|\sj-\sj_{q_n}\|\leq q_{n+1}^3 e^{-\eta q_n}.
\end{align}

\begin{acknowledgment}
    We would like to thank Fan Yang for the helpful discussions which led to the substantial shortening the of the proof of Theorem \ref{theorem : estimate for bad sets}.
\end{acknowledgment}

\appendix

\section{Floquet Theory}
\label{appendix: Floquet Theory}

The following material is for the convenience of the reader and aimed to make the manuscript as self-contained as possible. We focus primarily on the necessary ideas and conventions of Floquet Theory which are needed to properly grasp (some of) the results in \cites{Fil2017, DaGa, DYL}, which originally motivated this paper. For a proper exposition, with many more details provided, we refer to the monograph by B. Simon \cite{Sim}.

In this manuscript our main focus is on that concerning sequences of periodic Jacobi matrices belonging to $\mathcal{J}(R)$, for $R>0$. This type of operators by definition are acting on $\ell^2(\Z)$, but can be thought as acting on any sequence, $\{\phi_n\}_{n=-\infty}^\infty$, via
\begin{equation*}
    (\sj \phi)_n = a_n\phi_{n+1}+b_n\phi_{n}+a_{n_1}\phi_{n-1}.
\end{equation*}

Suppose $\sj\in\mathcal{J}_{q_n}(R)$, that is, $\sj = \sj_{q_n}$, a $q_n$-periodic Jacobi matrix. Then it must follow that all eigenvalues of $\sj_{q_n}$ should lie on the flat torus $\T\simeq \R/2\pi\Z$. In other words, if $\phi$ is a solution for the eigenvalue problem for $\sj$, it must satisfy
\begin{equation*}
    \phi_{n+q_n} = e^{i 2\pi\theta}\phi_n, \quad \forall n\in \Z, \quad \theta \in\R.
\end{equation*}

We call such solutions \textbf{Floquet solutions}. 
\begin{definition}[Floquet Space]
    \begin{align}
        \mathcal{H}_{q_n} &\eqdef L^2\left([0,1), \C^{q_n}; \text{d}\theta\right)\\
        & =\left\{f \colon [0,1) \longrightarrow \C^{q_n } \left|\, \, \int\limits_{[0,1)}\|f(\theta)\|_{C^{q_n}}^2 \text{d}\theta < \infty\right.\right\} \\
        & = \int\limits^{\oplus}_{[0,1)} \C^{q_n}\text{d}\theta.
    \end{align}
    with an inner product given by 
    \begin{equation}
        \langle f, g\rangle_{\mathcal{H}_{q_n}} = \int_0^1\langle f(\theta), g(\theta)\rangle_{\C^{q_n}} \text{d}\theta.
    \end{equation}
    Thus, Floquet space is a Hilbert Space.
\end{definition}

\begin{definition}[Floquet transform]
    The linear operator $\mathcal{F}_{q_n} \colon \ell^{2}(\Z)\longrightarrow \mathcal{H}_{q_n}$ given by
    \begin{equation}
        \left[\mathcal{F_{q_n}\phi}\right]_{k}(\theta) = \sum_{p\in\Z}\phi_{k+pq_{n}}e^{-ip2\pi\theta}, \qquad \theta\in \T, \quad 0\leq k\leq q_n-1.
    \end{equation}
\end{definition}
\begin{remark}
    It is an easy exercise to verify that $\mathcal{F}_{q_n}$ extends unitarily from $\ell^2(\Z)$ to $\mathcal{H}_{q_n}$; where the inverse is given by
    \begin{equation}
        \mathcal{F}_{q_n}^{-1}(g_{k+pq_n}) = \int_0^1 e^{i p 2\pi\theta}g_k(\theta)\text{d}\theta, \qquad 0\leq k\leq q_n-1, \quad p\in\Z.
    \end{equation}
\end{remark}

For convenience we are going to assume that ${q_n}\geq 3$, so that for each $\theta \in \T$ we can define
\begin{equation*}
     \sj_{q_n}\left(\theta\right) \eqdef \begin{pmatrix}
        b1 & a_1 & 0 & &  \cdots & e^{-i2\pi\theta}a_{q_n}\\
        a_1 & b_2 & a_2  & & & \\
         0& a_2& b_3& \ddots & & \\
         \vdots & & \ddots & \ddots&  \ddots&\\
          & & & \ddots & b_{q_{n} -1} & a_{q_{n}-1} \\
         e^{i2\pi \theta}a_{q_n}& & & & a_{q_{n}-1} & b_{q_{n}}  \\
            \end{pmatrix}_.
\end{equation*}

Here $\sj_{q_n}(\theta)$ is a Jacobi matrix that leaves invariant the space of all \textit{Floquet solutions} of period ${q_n}$. Moreover, notice $\sj_{q_n}(\theta)$ is self-adjoint and has simple eigenvalues, that is, whenever $\theta\notin\{0, \frac{1}{2}\}$. Thus, we denote the eigenvalues of $\sj_{q_n}(\theta)$ by
\begin{equation*}
    \lambda_{ 1}(\theta) \leq \lambda_{ 2}(\theta) \leq \cdots \leq \lambda_{ {q_n}}(\theta),
\end{equation*}
accounting for multiplicities. For any operator $Q$ on $\ell^2(\Z)$ we denote its action on Floquet space by its conjugation by the Floquet transform. In our setting, with $q_n$-periodicity, this is,
\begin{equation}
\widehat Q \eqdef \mathcal{F}_{q_n}Q\mathcal{F}^{-1}_{q_n}.    
\end{equation}
\begin{proposition}\label{action on floquet space}
    Let be $\sj_{q_n}$ be a $q_n$-periodic Jacobi matrix. Then,
    \begin{equation}
        \widehat \sj_{q_n} = \int\limits_{\T}^\oplus \sj_{q_n}(\theta)\text{d}\theta,
    \end{equation}
    that is,
    \begin{equation}
        \left[\widehat \sj_{q_n }g\right](\theta) = \sj_{q_n}(\theta)g(\theta), \qquad \text{a.e. } \theta\in \T, \qquad g\in \mathcal{H}_{q_n}.
    \end{equation}
    \begin{proof}
        Let $\{e_k\}_{k=0}^{q_n-1}$ be the canonical orthonormal basis of $\C^{q_n}$ so that $\{e^{-2\pi ip\theta}e_k \colon 0\leq k\leq q_{n}-1, \, p\in\Z\}$ is an orthonormal basis of $\mathcal{H}_{q_n}$. Thus, verifying on any basis element,
        \begin{align}
            \widehat \sj e^{-2\pi ip\theta}e_{k} & = \mathcal{F}_{q_n}\sj \mathcal{F}^{-1}_{q_n}e^{-2\pi ip\theta} e_{k}\\
            & = \mathcal{F}_{q_n}\sj e_{k}\\
            & =\mathcal{F}_{q_n}(a_{k-1}e_{k-1}+ b_ke_k+a_{k}e_{k+1})\\
            & = a_{k-1}e^{-2\pi ip\theta}e_{k-1} + b_ke^{-2\pi ip\theta}e_{k} +a_ke^{-2\pi ip\theta}e_{k+1}\\
            & = e^{-2\pi ip\theta}(a_{k-1}e_{k-1}+ b_ke_k+a_{k}e_{k+1}) \\
            & = \sj_{q_n}(\theta) \left[e^{-2\pi ip\theta}e_k\right],
        \end{align}
        where the cases $k=0, q_n-1$ follow by taking an appropriate re-labeling. 
    \end{proof}
\end{proposition}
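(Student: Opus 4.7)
The plan is to verify the fiberwise action of $\widehat{\sj}_{q_n}$ on a convenient orthonormal basis of $\mathcal{H}_{q_n}$, namely $\{e^{-2\pi i p\theta} e_k : 0 \leq k \leq q_n - 1,\, p \in \Z\}$, where $\{e_k\}_{k=0}^{q_n-1}$ denotes the canonical basis of $\C^{q_n}$. Since both $\widehat{\sj}_{q_n}$ and the direct integral $\int_{\T}^{\oplus} \sj_{q_n}(\theta)\,\text{d}\theta$ are bounded linear operators on $\mathcal{H}_{q_n}$, matching them on this basis is enough by linearity, continuity, and density.

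First I would use the inversion formula from the Remark to rewrite each basis element as a standard $\ell^2(\Z)$-delta, namely $\mathcal{F}_{q_n}^{-1}(e^{-2\pi i p\theta} e_k) = \delta_{k+pq_n}$. The task then reduces to computing $\mathcal{F}_{q_n}(\sj_{q_n}\, \delta_{k+pq_n})$ and comparing it with $\sj_{q_n}(\theta)(e^{-2\pi i p\theta} e_k)$. The three-term recurrence gives $\sj_{q_n}\, \delta_{k+pq_n} = a_{k+pq_n - 1}\delta_{k+pq_n - 1} + b_{k+pq_n}\delta_{k+pq_n} + a_{k+pq_n}\delta_{k+pq_n+1}$, and the $q_n$-periodicity of the coefficients simplifies the weights to $a_{k-1}, b_k, a_k$ (with indices interpreted cyclically modulo $q_n$).

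In the bulk regime $1 \leq k \leq q_n - 2$, each of the three $\delta$ terms maps through $\mathcal{F}_{q_n}$ to $e^{-2\pi i p\theta}$ times $e_{k-1}$, $e_k$, and $e_{k+1}$ respectively, reassembling into $e^{-2\pi i p\theta}(a_{k-1} e_{k-1} + b_k e_k + a_k e_{k+1})$. This is exactly the $k$-th column of $\sj_{q_n}(\theta)$ acting on $e^{-2\pi i p\theta} e_k$, since the boundary factors $e^{\pm i 2\pi\theta}$ in the corners do not contribute in the bulk.

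The main obstacle is the two boundary cases $k = 0$ and $k = q_n - 1$, where the $\delta$-indices wrap past a period. For $k = 0$ the term $\delta_{p q_n - 1}$ must be re-indexed as $\delta_{(q_n - 1) + (p-1)q_n}$, so that $\mathcal{F}_{q_n}$ sends it to $e^{-2\pi i (p-1)\theta} e_{q_n - 1} = e^{i 2\pi\theta}\,(e^{-2\pi i p\theta} e_{q_n - 1})$; the extra phase $e^{i 2\pi\theta}$ combined with the cyclic coefficient $a_{-1} = a_{q_n}$ reproduces exactly the corner entry of $\sj_{q_n}(\theta)$. A symmetric re-indexing handles $k = q_n - 1$ and yields the other corner term involving $e^{-i 2\pi\theta} a_{q_n}$. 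Collecting the pieces gives $\widehat{\sj}_{q_n}(e^{-2\pi i p\theta} e_k) = \sj_{q_n}(\theta)(e^{-2\pi i p\theta} e_k)$ on every basis element, which by density yields the desired direct-integral identity.
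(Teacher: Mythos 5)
Your proposal is correct and follows essentially the same route as the paper: verify the identity on the orthonormal basis $\{e^{-2\pi i p\theta}e_k\}$ of $\mathcal{H}_{q_n}$ by pulling back through $\mathcal{F}_{q_n}^{-1}$ to $\delta_{k+pq_n}$, applying the three-term recurrence, and re-indexing at the boundary. The only difference is that you spell out the wrap-around cases $k=0$ and $k=q_n-1$ (the extra phase $e^{\pm i2\pi\theta}$ matching the corner entries), which the paper dismisses with ``an appropriate re-labeling''; your version is the more complete one.
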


\section{The position operator and quantum dynamics}\label{appendix: B}

Let $X$ be the position operator as in \eqref{position operator}, and $\sj_{q_n}\in \mathcal{J}_{q_n}(R)$ be a $q_n$-periodic Jacobi matrix. We define the \textit{momentum operator} induced by $\sj_{q_n}$ by,
\begin{equation}
    A_{q_n} \eqdef i[\sj_{q_n}, X] = i\sj_{q_n} X - i X\sj_{q_n}.
\end{equation}
Applying \ref{action on floquet space} to $A_{q_n}$ we see that its action over Floquet space depends on the angle $\theta\in \T$. Further, we see that if $\sj_{q_n}(\theta)$ has $q_n$-periodic boundary conditions, we must have that $A_{q_n}(\theta)$ should be of the form
\begin{equation}\label{eq:Aqn_theta}
    A_{q_n}(\theta) = \begin{pmatrix}
        0 & ia_1 & 0 & &  \cdots & -ie^{-2\pi i\theta}a_{q_n}\\
        -ia_1 & 0 & ia_2  & & & \\
         0& -ia_2& 0& \ddots & & \\
         \vdots & & \ddots & \ddots&  \ddots&\\
          & & & \ddots & 0 & ia_{q_{n}-1} \\
         ie^{2\pi i \theta}a_{q_n}& & & & -ia_{q_{n}-1} & 0  \\
            \end{pmatrix}_.
\end{equation}

Clearly, $A_{q_n}(\theta)$ is self-adjoint. Thus, similarly to the previous discussion about $q_n$-periodic Jacobi matrices we observe
\begin{align}
    \mathcal{F}_{q_n}A_{q_n}\mathcal{F}^{-1}_{q_n}e^{-2 \pi i p\theta}e_{k} & = \mathcal{F}_{q_n}\left(i\sj_{q_n} X - i X\sj_{q_n}\right)\mathcal{F}^{-1}_{q_n}e^{-2 \pi i p\theta}e_{k}\\
    & = i\mathcal{F}_{q_n}\left(\sj_{q_n} X \right)\mathcal{F}^{-1}_{q_n}e^{-2 \pi i p\theta}e_{k} - i \mathcal{F}_{q_n}\left( X\sj_{q_n}\right)\mathcal{F}^{-1}_{q_n}e^{-2 \pi i p\theta}e_{k}\\
    & = i\left( ka_{k-1}e^{-2\pi ip\theta}e_{k-1} + kb_ke^{-2\pi ip\theta}e_{k} +ka_ke^{-2\pi ip\theta}  e_{k+1}\right)  \\
    & - i\left( (k-1)a_{k-1}e^{-2\pi ip\theta}e_{k-1} + kb_ke^{-2\pi ip\theta}e_{k} +(k+1)a_ke^{-2\pi ip\theta}e_{k+1}\right)\\
    & =e^{-2\pi ip\theta}\left(i a_{k-1}e_{k-1} - ia_ke_{k+1}\right)\\
    & = A_{q_n}(\theta)e^{-2\pi ip\theta}e_{k}
\end{align}
In light of the identification given by Proposition \ref{action on floquet space}, we can regard the action of a given Jacobi matrix, $\sj_{q_n}$, as matrix multiplication, by $\sj_{q_n}(\theta)$, on each fiber over the flat-torus. Thus, we can study the dynamical properties of a Jacobi matrix by instead focusing on the properties of a continuous family of period-preserving finite matrices $\{\sj_{q_n}(\theta)\}_{\theta\in\T}$. 

Similar as Fillman \cite{Fil2017}--- and following closely Damanik-Yessen's argument --- our end goal is to be able to relate the Heisenberg evolution associated to the Jacobi matrix $\sj$, to some operator $Q$, which depends on the angle $\theta$. To do so, by the previous discussion, we look at the measurable sections. These are, maps of the form
\begin{equation}
    e_k \colon \T \longrightarrow \bigsqcup_{\theta \in \T} \C^{q_n}_\theta
\end{equation}
which, morally speaking, assigns to (almost) every $\theta\in \T$ an orthonormal basis $\{e_k(\theta) \,\colon\, 0\leq k \leq q_n-1\}$ of $\C^{q_n}_\theta $. We can think of $\sj_{q_n}$ as acting as matrix multiplication at each fixed fiber $\theta\in \T$. Thus, the eigenvalue problem at each fiber is given by 
\begin{equation}
    \sj(\theta)_{q_n}(\theta)e_k(\theta) = \lambda_{q_n,k}(\theta)e_k(\theta), \quad\quad \forall 0\leq k \leq q_n-1, \quad \theta\in\T. 
\end{equation}

Recall that $\lambda_{q_n,k}(\theta)$ is analytic in $\T\setminus\{0,\frac{1}{2}\}$ (see, \ref{theorem: properties of the eigenvalues}.) Let
\begin{equation}
    P_k(\theta)\colon \int\limits_\T^{\,\,\oplus }C^{q_n}_\theta \text{d}\theta \longrightarrow \ker\left(\sj_{q_n}(\theta) -\lambda_{q_n,k}(\theta)\mathbb{I}\right),
\end{equation}
where $\mathbb{I}$ is the identity matrix on each $\C^{q_n}_{\theta}$.

By the proof of Theorem 1.6 of Damanik-Lukiç-Yessen in \cite{DYL}, the $Q$ operator in  can by identified using the direct integral formalism, that is,
\begin{equation}
    \widehat{Q}_{q_n} = \int\limits_{\T}^{\,\,\oplus} Q_{q_n}(\theta) \text{d}\theta.
\end{equation}

Here, 
\begin{equation}
        Q_{q_n}(\theta ) = q_n \sum_{k=0}^{q_n-1}\dot{\lambda}_{q_n,k}(\theta)P_{k}(\theta), \qquad \theta\in \T\setminus\left\{0,\frac{1}{2}\right\};
    \end{equation}
with the observation,
\begin{equation}
    \sigma(\sj_{q_n}) = \bigcup_{\theta\in\T}\sigma(\sj_{q_n}(\theta)) = \bigcup_{\theta\in\T}\bigcup_{k=0}^{q_n-1}\{\la_{q_n,k}(\theta)\}.
\end{equation}

\end{document}